\newtheorem{assumption}{Assumption} 
\newtheorem{thm}{Theorem}
\theoremstyle{definition}
\newtheorem{lem}{Lemma}
\newtheorem{rem}{Remark}
\newtheorem{defn}{Definition}
\title{The causal interpretation of panel vector autoregressions}
\author{Raimondo Pala\thanks{University of Rome Tor Vergata. Email: raimondopala@gmail.com \textit{Please note that this is a preliminary version. Future versions may include additional results.}} } 
\date{\today}
\begin{document}
	\maketitle
	\begin{abstract}
This paper discusses the different contemporaneous causal interpretations
of Panel Vector Autoregressions (PVAR). I show that the interpretation
of PVARs depends on the distribution of the causing variable, and
can range from average treatment effects, to average causal responses,
to a combination of the two. 

If the researcher is willing to postulate a no residual autocorrelation
assumption, and some units can be thought of as controls, PVAR can
identify average treatment effects on the treated. This method complements
the toolkits already present in the literature, such as staggered-DiD,
or LP-DiD, as it formulates assumptions in the residuals, and not
in the outcome variables. Such a method features a notable advantage:
it allows units to be ``sparsely'' treated, capturing the impact
of interventions on the innovation component of the outcome variables.

I provide an example related to the evaluation of the effects of
natural disasters economic activity at the weekly frequency in the
US.I conclude by discussing solutions to potential violations of the SUTVA assumption arising from interference.
\end{abstract}

\indent {\bf{JEL Classification}}:  C32, C33. \\ 
\noindent {\bf{Keywords}}: Vector autoregressive models, panel data, identification, difference in differences.    

\onehalfspacing
\thispagestyle{empty}

\pagebreak{}

\section{Introduction\label{sec:Introduction}}

Panel Vector Autoregressions (PVARs) are widely used across the social sciences because they provide an intuitive framework for estimating impulse response functions in large panel datasets. However, their use for causal inference remains limited, typically confined to notions of Granger causality. Recent advances at the intersection of time-series and causal inference show that traditional estimators can, under suitable conditions, admit causal interpretations in the Rubin causal framework (\citealt{bojinovshephard2019, menchettibojinov2022}).

This paper extends that insight to PVARs. I show that PVARs possess a causal interpretation under the same independence conditions discussed in \citet{RambachanSheppard2021}, and I characterize how the identified causal estimands, such as an Average Treatment Effect (ATE) and Average Causal Response (ACR) depend on the distribution of the policy variable.

The most important contribution of this paper is to show that Panel Vector Autoregressions (PVARs) can also be used to conduct causal inference in the presence of endogenous policies. By exploiting their panel structure, the estimation of a PVAR can be reframed as a dynamic comparison between treated and control units. In this sense, PVARs can serve as a useful and complementary tool to Difference-in-Differences (DiD) methods (\citealt{Card1990,CardKrueger}), as well as to their recent extensions such as staggered DiD (\citealt{sunabraham2021}) and local-projection DiD (\citealt{dube2023local}).

The key intuition is that if the autoregressive component of the model is capable of absorbing the dynamic effects of treatment, then previously treated units can act as valid controls in subsequent periods. This makes it possible to identify causal effects through the recursive estimation of a PVAR without the need for explicit orthogonality or independence assumptions typically required in other frameworks.

This approach also highlights several advantages over conventional DiD estimators. Standard DiD designs usually assume that once a unit is treated, it remains treated forever, restricting the set of possible control observations. Local-projection DiD estimators allow units to re-enter the control group after a certain time, but they impose a specific treatment duration ex ante.

PVARs offer a more flexible perspective. Because their identification focuses on the residuals rather than on mean outcome differences, it suffices to examine the distribution of the innovations to address residual autocorrelation. If treated and control units are comparable after controlling for the autoregressive component, the resulting contrasts can be interpreted as an Average Treatment effect on the Treated (ATT). Importantly, this ATT does not arise from direct comparisons of observed outcomes but from differences in the innovation components of treated and untreated units.

The paper concludes with an application related to the evaluation of the effects of natural disasters and discussing possible solutions to violations of the SUTVA assumption.

Narrative examples are constantly used through the paper to give a
proper meaning to the theoretical results, and the discussion is organized
as follows. Section \ref{sec:Introduction-to-Panel} discusses a causal
framework for PVARs. Section \ref{sec:Causal-effects} defines different
types of causal effects.

Section \ref{sec:Estimation-of-causal} shows that PVARs can estimate
an ATE if the policy is a dummy, an ACR if the policy is continuous,
a mix of the two if it is continuous and non-negative. Section \ref{subsec:policy-heterog}
shows that PVARs can estimate ATT if some units are extracted as controls
and there is no residual autocorrelation. Section \ref{sec:Applications}
provides an empirical example of this last result by discussing
the causal effect of natural disasters on weekly economic activity
in the US. Section \ref{sec:interference} discusses non-compliance.

Section \ref{sec:Conclusion} concludes.

\section{Introduction to Panel Vector Autoregressions \label{sec:Introduction-to-Panel}}

Consider a number of known interventions and outcomes, respectively
indexed by $k=1,...,K$ which could indicate fiscal and monetary policy,
and $j=1,..,J$ which could indicate output and employment. A PVAR
is generally ran on the aggregation of processes 
\[
x_{it}=(W_{k=1,it}^{\prime},W_{k=2,it}^{\prime},...,W_{k=K,it}^{\prime},Y_{j=1,it}^{\prime},Y_{j=2,it}^{\prime},..,Y_{j=J,it}^{\prime})',
\]
 where $W_{k,it}$ is a variable that indicates the treatment value
for unit $i$ by a treatment of type $k$ at time $t$, $Y_{j,it}$
is the outcome variable $j$ for unit $i$ at time $t$. 

Panel Vector Autoregressions are generally represented as a process
that depends on its past, a series of unit-specific characteristics,
and some random disturbances, hence:
\begin{equation}
\begin{aligned}x_{it}=(I_{m}-\Phi)\mu_{i}+\Phi x_{i,t-1}+\tilde{x}_{it}\qquad & i=1,..,N\quad t=1,..,T\end{aligned}
\label{eq:main_PVAR}
\end{equation}
where $\Phi$ denotes an $m\times m$ matrix of slope coefficients\footnote{This framework could be extended to random effects instead of fixed
effects. However, this change would have no impact on the nature of
the causal effects estimated.}, $\mu_{i}$ is an $m\times1$ vector of individual-specific effects,
$\tilde{x}_{it}$ is an $m\times1$ vector of disturbances, and $I_{m}$
denotes the identity matrix of dimension $m\times m$. The model can
be extended to include higher lags, but to keep the notation compact
I will use a one lag representation. \footnote{Any VAR(p) for $p>1$ can always be represented as a VAR(1).}
Moreover, through the paper I will make the implicit assumptions that
$x_{it}$ is stationary and that the disturbances are assumed to be
multivariate iid distributions with mean zero and not necessarily
normal. This is because assuming the normality of the distributions
could be in opposition to some of the conclusions that will be drawn
on the paper.\footnote{I will show later that the normality of the distribution of the first
variable in the policy column of $\widetilde{x}_{it}$ is just one
case over many possible.}

The focus of the following section will be on the disturbances 
\[
\tilde{x}_{it}=(\widetilde{W}_{k=1,it}^{\prime},\widetilde{W}_{k=2,it}^{\prime},...,\widetilde{W}_{k=K,it}^{\prime},\widetilde{Y}_{j=1,it}^{\prime},\widetilde{Y}_{j=2,it}^{\prime},..,\widetilde{Y}_{j=J,it}^{\prime})',
\]
 where the tilde represents the specific disturbance related to the
original variable. I will assume that the policy variables go first,
and the outcome variables follow. This system is in line with the
conventional way of thinking about Vector Autoregressions. For this
reason there exists one causal effect that is estimated per $j=1,..,J$
that corresponds to every intervention variable $k=1,..,K.$
\begin{rem}
The results that follow can also be obtained from a different system
in which the PVAR only includes the outcome variables and which residuals
are regressed against a $W_{k,it}$ which has the same distribution
as $\widetilde{W}_{k,it}$. This system may present several advantages
in the case of a no autocorrelated intervention which include, but
are not limited to: (1) a better capability of testing the distribution
of $W_{k,it}$ than $\widetilde{W}_{k,it}$, (2) the ability to avoid
an overparametrisation.\footnote{Another way of viewing this issue could be to consider this system
as the panel equivalent to the one analysed by \citet{bojinovshephard2019}
for the case of AR and VAR estimators.}
\end{rem}
\begin{rem}
The results that follow will be valid only in the case of recursive
identification. This is because other methods of inference, such as
sign, short run and long run restrictions, are much more difficult
to give a causal interpretation if they only partially identify the
system.\footnote{This is a known issue in the macroeconomics literature which is a
leading cause to several debates around the interpretability of Impulse
Response Functions in VARs and how restrictive the restrictions can
and should be. For example, the oil shock literature has long discussed
about the appropriateness of the IRF generated by a VAR that is partially
identified because it is unclear whether the assumption that oil shocks
can only have effects within {[}-1.5,0{]} is reasonable in light of
Bayesian inference (\citet{KilianOil2009,kilianzhou2019,baumeisterhamilton2019}).
Moreover, sign restrictions have been shown to assume a boundary that
is frequently not discussed by the economist but is implicit in the
way the rotations are formalized (\citet{baumeisterhamilton2015econometrica}).} However, methods such as IV or proxy variables (\citealp{Mertens2014,Olea2021})
have been shown to posses a causal interpretation akin to the LATE
one by \citealp{RambachanSheppard2021}. A follow up to this paper
includes a full description of identification, estimation and inference
in PVARs identified through external instruments (\citet{pala2024identificationestimationPVARIV}).
\end{rem}

\section{Causal effects \label{sec:Causal-effects}}

The object of the estimation of any VAR is a \emph{dynamic causal
effect. }Such effects arise as the difference between the outcome
variables under the assignment path $\widetilde{w}_{i,1:t}\in\mathcal{W}^{t}$
and a counterfactual path $\widetilde{w}_{i,1:t}'\in W^{t}$. Hence,
the objective of the estimation is the difference $\widetilde{Y}_{it}(\widetilde{w}_{i,1:t})-\widetilde{Y}_{it}(\widetilde{w}_{i,1:t}')$.
A reasonable way to discuss the potential outcome process is to represent
it as a function of past assignments, potential contemporaneous assignments,
and future assignments (see \citealp{RambachanSheppard2021}):
\[
\widetilde{Y}_{it}(\widetilde{w}):=\widetilde{Y}_{it}(\widetilde{W}_{i,1:t-1},\widetilde{w},\widetilde{W}_{i,t+1:T}).
\]

When it comes to Panel VARs, the contemporaneous effect is usually
identified by imposing restrictions on the covariance matrix of the
residuals. The $t+h$ effect, for $h\geq1$, is instead generally
computed by the means of the Impulse Response Functions.\footnote{This paper will only refer to the causal interpretation of the former, as the interpretation of the dynamic IRF follows trivially.}

Starting from this definition, I will consider the following causal
effects:
\begin{defn}
(Causal effects).\label{def:(ATTs)} For $t\geq1$ , and any fixed
$\widetilde{w},\widetilde{w}'\in\mathcal{W}$
\end{defn}
\begin{enumerate}
\item \textbf{Average Treatment Effect} is $\mathbb{E}[\widetilde{Y}_{it}(\widetilde{w})-\widetilde{Y}_{it}(\widetilde{w}')]$, 
\item \textbf{Average Treatment effect on the Treated }is $\mathbb{E}[\widetilde{Y}_{it}(\widetilde{w})-\widetilde{Y}_{it}(\widetilde{w}^{\prime})|\widetilde{W}_{it}=\widetilde{w}]$, 
\item \textbf{Average Causal Response} is $\frac{\delta\mathbb{E}[\widetilde{Y}_{it}(\lambda)]}{\delta\lambda}$
,
\item \textbf{Average Causal Response on the Treated }is $\frac{\delta\mathbb{E}[\widetilde{Y}_{it}(\lambda)|\widetilde{W}_{it}=\lambda]}{\delta\lambda}$.
\end{enumerate}
The difference $\widetilde{Y}_{it}(\widetilde{w})-\widetilde{Y}_{it}(\widetilde{w}')$
is the so called Impulse effect. The Average Treatment Effect (ATE)
is usually the target of analysis, since it is informative of the
effect of receiving treatment $\widetilde{w}$ versus receiving the
treatment $\widetilde{w}'$. 

The average treatment effect on the treated (ATT) therefore measures
its expectation, but only conditionally on the subset of time being
one in which there is a treatment. It carries a substantially different
meaning compared to ATEs since it includes a selection bias term.\footnote{The selection bias is $\frac{cov(\widetilde{Y}_{j,it}(w_{k}),1\{\widetilde{W}_{k,it}=w_{k}\})}{var(1\{\widetilde{W}_{k,it}=w_{k}\})}$.}

Notice that both of these effects assume a structure associated to
the policy variable that is a dummy variable that is equal to $\tilde{w}$
for treated units and to $\tilde{w}^{\prime}$ for the untreated ones.

The definitions of the average causal response (ACR) is instead well
discussed by \citealp{callawaysantannabacon2021}. It represents the
causal effect of changes in the policy variables on the outcome variable.
In this context, the researcher could easily derive the effect of
a change of a given size in a policy on the outcome variable. Lastly,
the concept of average causal response on the treated (ACRT) represents
the same idea, but it includes the conditioning factor on the observed
times, indicating the presence of a selection bias component. Both
of those causal effects can result from continuous policy variables. 

The researcher is usually interested in the effects of changes in
one specific policy variable to an outcome variable. Hence, the dynamic
causal effects for a particular assignment $k$, defined $\widetilde{w}_{k}\in\mathcal{W}_{k}$
on the outcome $j$, are defined as:
\[
\widetilde{Y}_{j,it}(\widetilde{w}_{k}):=\widetilde{Y}_{j,it}(\widetilde{W}_{i,1:t-1},\widetilde{W}_{1:k-1,it},\widetilde{w}_{k},\widetilde{W}_{k+1:K,it},\widetilde{W}_{i,t+1:T}).
\]

\begin{enumerate}
\item Its associated \textbf{ATE} is $\mathbb{E}[\widetilde{Y}_{j,it}(\widetilde{w}_{k})-\widetilde{Y}_{j,it}(\widetilde{w}_{k}')]$,
\item its associated \textbf{ATT} is $\mathbb{E}[\widetilde{Y}_{j,it}(\widetilde{w}_{k})-\widetilde{Y}_{j,it}(\widetilde{w}_{k}')|\widetilde{W}_{k,it}=\widetilde{w}_{k}]$, 
\item its associated \textbf{ACR} is $\frac{\delta\mathbb{E}[\widetilde{Y}_{j,it}(\lambda_{k})]}{\delta\lambda_{k}}$,
\item its associated \textbf{ACRT} is $\frac{\delta\mathbb{E}[\widetilde{Y}_{j,it}(\lambda_{k})|\widetilde{W}_{k,it}=\lambda_{k}]}{\delta\lambda_{k}}$.
\end{enumerate}

\section{Estimation of causal effects of interest of Panel Vector Autoregressions
\label{sec:Estimation-of-causal}}

Until now, the structure of the policy variable has been kept as flexible
as possible to highlight the high potential of PVAR to estimate different
causal effects depending on the distribution of the policy. In the
following sections, I will show that PVAR can estimate several different
causal effects depending on the distribution of the policy variable.
Figure \ref{fig:Tree} represents the different type of causal effects
that can be identified according to the policy variable, which will
be discussed more in depth in the subparagraphs that follow.

\begin{figure}[h]
\centering{}\includegraphics[scale=0.8]{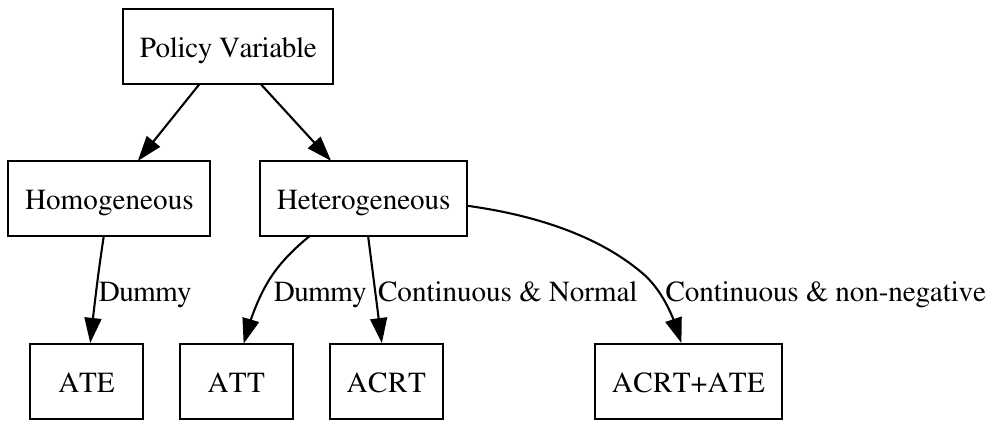}\caption{Causal effects depend on the structure of the policy variable. \label{fig:Tree}}
\end{figure}
Through the paper, the stable unit value treatment assumption is assumed to hold. In this context, it can be framed as follows
\begin{assumption}(SUTVA): \label{assu:SUTVA}
For each unit $i$ and time $t$, the potential outcome given a contemporaneous assignment 
$\widetilde{w}$ and past and future assignments of the unit,
\[
\widetilde{Y}_{j,it}(\widetilde{w}_{k}) := \widetilde{Y}_{it}\big(\widetilde{W}_{k,i,1:t-1}, \widetilde{w}_{k}, \widetilde{W}_{k,i,t+1:T} \big),
\]
is invariant to the assignment paths of all other units:
\[
\widetilde{Y}_{j,it}\big(\widetilde{W}_{k,i,1:t-1}, \widetilde{w}_{k}, \widetilde{W}_{k,i,t+1:T}; \widetilde{W}_{k,-i,1:T}\big)
= \widetilde{Y}_{j,it}\big(\widetilde{W}_{k,i,1:t-1}, \widetilde{w}_{k}, \widetilde{W}_{k,i,t+1:T}\big),
\]
where $\widetilde{W}_{k,-i,1:T} = \{\widetilde{W}_{k,i,1:T}: -i\neq i\}$ denotes the collection of all other units' assignment paths.
\end{assumption}
potential violations of the assumption due to interference are discussed in section \ref{sec:interference}, which revises the main empirical application explicitly estimating spillover effects.

\subsection{The policy variable is randomized and homogeneous across units\label{subsec:policy-time-dummy}}

A researcher is interested in the evaluation of the effects of hurricanes
in the economy of some states. All the data they have collected corresponds
to unemployment, inflation, and the moments in which all the states
have been affected by the disaster shock. This means that they have
no control regions available. The researcher decides to make the assumption
that, were the regions not affected by a disaster shock, they would
have followed a VAR process. In this case, all units are defined as
either treated or control in each period. 
\begin{assumption}
(Policy variable is homogeneous across units): \label{assu:(Policy-homogeneous)}
For all $i=1,..,N$, at a specific time $t$, either $\widetilde{W}_{k,it}=1$
or $\widetilde{W}_{k,it}=0$. 
\end{assumption}
\begin{assumption}
(Policy variable is random): \label{assu:(Policy-variable-is-random)}
The policy variable is independent from the other policies, its past,
future and the potential outcome process, so that:
\[
\widetilde{W}_{k,it}\perp(\widetilde{W}_{k,i,1:t-1},\widetilde{W}_{1:k-1,i,t},\widetilde{W}_{k+1:K,i,t},\widetilde{W}_{k,i,t+1:T},\{\widetilde{Y}_{ji,t:T}(1):1\in\mathcal{W}^{t:T}\})
\]

\noindent for all $k$, $j$, $i$, $t$ .
\end{assumption}
Assumption \ref{assu:(Policy-homogeneous)} indicates that the treated
moments are the ones for which $\widetilde{W}_{k,it}=1$,
while control units are the ones for which $\widetilde{W}_{k,it}=0$. This interpretation aligns with common practice in IRF analysis, where
 shocks are typically expressed as unit shocks or one-standard-deviation shocks.
Assumption \ref{assu:(Policy-variable-is-random)} indicates that
the hurricanes:
\begin{enumerate}
\item Must be independent from previous hurricanes (after controlling for
their autocorrelation).
\item Must be independent with respect to future hurricanes.
\item Must be independent with respect to the potential outcome of inflation
and unemployment.
\item Must be independent with respect to other influencing factors of inflation
and unemployment.
\end{enumerate}
If the previous assumptions are satisfied, obtaining the conditions
under which the estimated causal effects are ATEs is straightforward.
Notice that the estimated effects will be the following:
\begin{thm}
(PVAR: Randomized, dummy, homogeneous policy): \label{thm:(PVARs-Impulse-Response}Under
assumption \ref{assu:SUTVA} and \ref{assu:(Policy-homogeneous)}, Panel Vector Autoregressions
estimate, for all $j$, all $k$, all $i$ and all $t$, the following
causal effect:

\[
\gamma_{jk}=\mathbb{E}[\widetilde{Y}_{j,it}|\widetilde{W}_{k,it}=1]-\mathbb{E}[\widetilde{Y}_{j,it}|\widetilde{W}_{k,it}=0]
\]

\noindent Which means that $\gamma_{jk}$ can be decomposed into the
average treatment effect and a selection bias term as follows:

\[
\gamma_{jk}=\mathbb{E}[\widetilde{Y}_{j,it}(1)-\widetilde{Y}_{j,it}(0)]+\Delta_{j,k,i,t}(1,0)
\]

\noindent where:
\end{thm}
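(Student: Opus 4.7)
The plan is to unpack three ingredients and combine them: (i) the recursive PVAR's structural coefficient on a contemporaneous policy residual is the OLS slope in a regression of $\widetilde{Y}_{j,it}$ on $\widetilde{W}_{k,it}$; (ii) when the regressor is binary, that slope is algebraically identical to the difference in conditional means; and (iii) SUTVA lets the observed residual $\widetilde{Y}_{j,it}$ be written in terms of potential outcomes, after which the Angrist--Pischke identity produces the ATE-plus-selection-bias decomposition.

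First I would cite Remark 2 to restrict attention to recursive identification and use the chosen ordering (all $\widetilde{W}_{k,\cdot}$ before all $\widetilde{Y}_{j,\cdot}$) to write the $j$-th outcome equation of the structural PVAR as a linear projection of $\widetilde{Y}_{j,it}$ on the contemporaneous policies (and the preceding outcomes); the coefficient on $\widetilde{W}_{k,it}$ is by construction $\gamma_{jk}$. Next, because Assumption \ref{assu:(Policy-homogeneous)} guarantees $\widetilde{W}_{k,it}\in\{0,1\}$, the standard binary-regressor identity yields
\[
\gamma_{jk}=\mathbb{E}[\widetilde{Y}_{j,it}\mid\widetilde{W}_{k,it}=1]-\mathbb{E}[\widetilde{Y}_{j,it}\mid\widetilde{W}_{k,it}=0],
\]
which is the first line of the conclusion. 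Invoking SUTVA (Assumption \ref{assu:SUTVA}), the observed residual satisfies $\widetilde{Y}_{j,it}=\widetilde{W}_{k,it}\,\widetilde{Y}_{j,it}(1)+(1-\widetilde{W}_{k,it})\,\widetilde{Y}_{j,it}(0)$, so $\mathbb{E}[\widetilde{Y}_{j,it}\mid\widetilde{W}_{k,it}=w]=\mathbb{E}[\widetilde{Y}_{j,it}(w)\mid\widetilde{W}_{k,it}=w]$ for $w\in\{0,1\}$. Adding and subtracting $\mathbb{E}[\widetilde{Y}_{j,it}(0)]$ (and symmetrically $\mathbb{E}[\widetilde{Y}_{j,it}(1)]$) rearranges the difference of conditional means into $\mathbb{E}[\widetilde{Y}_{j,it}(1)-\widetilde{Y}_{j,it}(0)]$ plus a residual term $\Delta_{j,k,i,t}(1,0)$ that, after routine algebra, collapses to the covariance-over-variance expression announced in the earlier footnote.

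The main obstacle is the very first step: ensuring that the recursive PVAR actually delivers a projection of $\widetilde{Y}_{j,it}$ on $\widetilde{W}_{k,it}$ alone rather than on $\widetilde{W}_{k,it}$ partialling out the other contemporaneous policies $\widetilde{W}_{k',it}$ with $k'\neq k$. With several simultaneous interventions the binary-regressor-equals-difference-of-means shortcut applies cleanly only when the other policies are orthogonal to $\widetilde{W}_{k,it}$; otherwise one must reinterpret the conditional expectations as partialled-out, which is awkward without the independence Assumption \ref{assu:(Policy-variable-is-random)} implied by the theorem's title. I would therefore either present the argument for a single policy ($K=1$) and state the multi-policy case as a corollary under Assumption \ref{assu:(Policy-variable-is-random)}, or keep the multi-policy statement while reading all conditional expectations as FWL-residualised with respect to the remaining contemporaneous policies, pointing out that both interpretations coincide under the randomisation condition.
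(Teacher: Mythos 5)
Your argument is correct and follows essentially the same route as the paper: the paper likewise reads $\gamma_{jk}$ off the regression/Cholesky coefficient (Lemma \ref{lemmaa1}), uses SUTVA to pass from realized to potential outcomes, and derives the decomposition by applying $\mathbb{E}[AB]=\mathbb{E}[A]\mathbb{E}[B]+\mathrm{Cov}(A,B)$ to $\mathbb{E}[\widetilde{Y}_{j,it}1\{\widetilde{W}_{k,it}=1\}]$ and dividing by $\mathbb{P}(\widetilde{W}_{k,it}=1)$ --- algebraically identical to your add-and-subtract of the unconditional means (note only that the resulting $\Delta_{j,k,i,t}(1,0)$ is normalized by $\mathbb{E}[1\{\widetilde{W}_{k,it}=w\}]$, as in the theorem statement, not by the variance of the indicator as in the ATT footnote you cite). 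Your caveat about partialling out the other contemporaneous policies when $K>1$ is legitimate, but the paper faces the same issue and resolves it only through the ``without loss of generality'' single-policy reduction in Lemma \ref{lemmaa1}.
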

\[
\Delta_{j,k,i,t}(1,0):=\frac{Cov(\widetilde{Y}_{j,it}(1),1\{\widetilde{W}_{k,it}=1\})}{\mathbb{E}[1\{\widetilde{W}_{k,it}=1\}]}-\frac{Cov(\widetilde{Y}_{j,it}(0),1\{\widetilde{W}_{k,it}=0\})}{\mathbb{E}[1\{\widetilde{W}_{k,it}=0\}]}
\]
This means that the causal effects identified by the IRF of a PVAR
will be Average Treatment Effects only if $\Delta_{j,k,i,t}(1,0)=0$,
i.e. the selection bias is zero. This can happen if assumption \ref{assu:(Policy-variable-is-random)}
is satisfied.
\begin{thm}
(PVARs estimate ATEs): \label{thm:(PVARs-estimate-ATEs).} Under assumptions \ref{assu:SUTVA},
\ref{assu:(Policy-homogeneous)} and \ref{assu:(Policy-variable-is-random)}
the following is true:
\begin{equation}
\begin{aligned}Cov(\widetilde{Y}_{j,it}(1),1\{\widetilde{W}_{k,it}=1\})=0,\qquad & Cov(\widetilde{Y}_{j,it}(0),1\{\widetilde{W}_{k,it}=0\})=0\end{aligned}
\label{eq:pt1_of_theorem_ATT}
\end{equation}

\noindent then $\Delta_{j,k,it}(1,0)=0$.
Moreover, equation \ref{eq:pt1_of_theorem_ATT} is satisfied if: 
\[
\begin{aligned}\widetilde{W}_{k,it}\perp\widetilde{Y}_{j,it}(1), & \text{and} & \widetilde{W}_{k,it}\perp\widetilde{Y}_{j,it}(0)\end{aligned}
\]

\noindent which is in turn implied by:
\[
\widetilde{W}_{k,it}\perp\{\widetilde{Y}_{j,it}(1):1\in\mathcal{W}_{k}\}
\]

\noindent which can be satisfied if assumption \ref{assu:(Policy-variable-is-random)}
is respected. Hence, Panel Vector Autoregressions identify, for all
$j$, all $k$, all $t$ and $i$:
\[
\gamma_{jk}=\widetilde{\text{ATE}}_{j,k,i,t}=\mathbb{E}[\widetilde{Y}_{j,it}(1)-\widetilde{Y}_{j,it}(0)]
\]
\end{thm}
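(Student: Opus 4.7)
The plan is to chain the implications in the order given by the statement: Assumption \ref{assu:(Policy-variable-is-random)} $\Rightarrow$ joint independence of the contemporaneous treatment and the potential outcome family $\Rightarrow$ marginal independence for each fixed level $\widetilde{w}\in\{0,1\}$ $\Rightarrow$ the two covariance identities in \eqref{eq:pt1_of_theorem_ATT} $\Rightarrow$ $\Delta_{j,k,i,t}(1,0)=0$ $\Rightarrow$ identification of the ATE via Theorem \ref{thm:(PVARs-Impulse-Response}. Most of the substantive work is already done by Theorem \ref{thm:(PVARs-Impulse-Response}, which delivers the decomposition $\gamma_{jk}=\mathbb{E}[\widetilde{Y}_{j,it}(1)-\widetilde{Y}_{j,it}(0)]+\Delta_{j,k,i,t}(1,0)$; so the task reduces to showing that the selection bias term vanishes.

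First, I would read off from Assumption \ref{assu:(Policy-variable-is-random)} that $\widetilde{W}_{k,it}$ is independent of the entire collection $\{\widetilde{Y}_{j,i,t:T}(1):1\in \mathcal{W}^{t:T}\}$. Restricting this collection to the contemporaneous potential outcomes indexed by $\widetilde{w}_k\in\mathcal{W}_k$ yields the joint independence $\widetilde{W}_{k,it}\perp \{\widetilde{Y}_{j,it}(\widetilde{w}_k):\widetilde{w}_k\in\mathcal{W}_k\}$. By the standard fact that joint independence implies marginal independence for each coordinate, this in turn gives $\widetilde{W}_{k,it}\perp \widetilde{Y}_{j,it}(1)$ and $\widetilde{W}_{k,it}\perp \widetilde{Y}_{j,it}(0)$ separately.

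Second, I would apply the elementary fact that if $X\perp Y$ then any measurable functions of $X$ and $Y$ are uncorrelated. Taking $Y=\widetilde{Y}_{j,it}(w)$ and $X=1\{\widetilde{W}_{k,it}=w\}$ (a measurable function of $\widetilde{W}_{k,it}$) for $w\in\{0,1\}$ delivers the two covariance identities in \eqref{eq:pt1_of_theorem_ATT}. Substituting these zero covariances into the definition
\[
\Delta_{j,k,i,t}(1,0)=\frac{\mathrm{Cov}(\widetilde{Y}_{j,it}(1),1\{\widetilde{W}_{k,it}=1\})}{\mathbb{E}[1\{\widetilde{W}_{k,it}=1\}]}-\frac{\mathrm{Cov}(\widetilde{Y}_{j,it}(0),1\{\widetilde{W}_{k,it}=0\})}{\mathbb{E}[1\{\widetilde{W}_{k,it}=0\}]}
\]
yields $\Delta_{j,k,i,t}(1,0)=0$, provided the denominators are non-zero, i.e.\ both treatment states occur with positive probability at $(i,t)$ (a mild positivity condition implicit in the homogeneous-dummy setting of Assumption \ref{assu:(Policy-homogeneous)}). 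Plugging $\Delta=0$ back into the decomposition from Theorem \ref{thm:(PVARs-Impulse-Response} delivers $\gamma_{jk}=\mathbb{E}[\widetilde{Y}_{j,it}(1)-\widetilde{Y}_{j,it}(0)]=\widetilde{\mathrm{ATE}}_{j,k,i,t}$, as claimed.

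Conceptually this proof is routine once Theorem \ref{thm:(PVARs-Impulse-Response} is in place: no step requires more than textbook properties of independence and covariance. The only point that warrants care, and which I would treat as the main obstacle, is the bookkeeping in the first step, namely verifying that the independence statement in Assumption \ref{assu:(Policy-variable-is-random)} genuinely encompasses the contemporaneous potential outcome process indexed by the (possibly counterfactual) value $\widetilde{w}_k$, rather than only the realized outcome $\widetilde{Y}_{j,it}(\widetilde{W}_{k,it})$. Once that is made explicit, the remaining steps are mechanical.
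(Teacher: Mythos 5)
Your proof is correct and follows exactly the chain of implications the paper intends: the paper's own "proof" is simply the remark that the argument is already contained in the text of the theorem (assumption \ref{assu:(Policy-variable-is-random)} gives independence, independence gives the zero covariances, hence $\Delta_{j,k,i,t}(1,0)=0$, and Theorem \ref{thm:(PVARs-Impulse-Response} then delivers the ATE). Your added remark about positivity of the denominators is a minor but legitimate refinement that the paper leaves implicit.
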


This result is similar to the one of \citealp{RambachanSheppard2021}
for VARs. The intuition for the similarity is that the system is purposefully
shrinked down to the case in which unit heterogeneity is not utilized
by the system. I will focus now in the cases in which the distribution
of the policy variable may not be binary. Subsection \ref{subsec:policy-continuous-normal}
discusses the case of a normal distribution of the policy variable's
innovations.

\subsection{The policy variable is continuous and normally distributed\label{subsec:policy-continuous-normal}}

A researcher is interested in evaluating the effects of monetary policy
on inflation, similarly to the problem of \citet{Sims1980}. Since
their interest is in a monetary union, they collect data on interest
rates, inflation, and other possibly relevant variables from all the
states belonging to the monetary union. They estimate a PVAR, and
observe that the innovation component of the interest rates is normally
distributed. They make the argument that after controlling for past
information, monetary policy decisions are as if random, and leverage
this assumption by using a recursively identified PVAR. To carry a
causal claim, they will need the following assumptions. 
\begin{assumption}
(Continuous differentiability): \label{assu:(Continuous-differentiability)}
The forecast errors of the outcome variables $\widetilde{Y}_{j,it}$
are continuously differentiable with respect to $\widetilde{W}_{k,it}.$
\end{assumption}
\begin{assumption}
(Normal distribution): \label{assu:(Normal-distribution)}The forecast
error of the policy variable is normally distributed $\widetilde{W}_{k,it}\sim\mathcal{N}(0,\sigma_{\widetilde{W}_{k,it}}^{2})$.
\end{assumption}
Assumption \ref{assu:(Continuous-differentiability)} may appear technical,
but indicates that the forecast errors of inflation are not ill defined
in some regions. Let us define $g(\widetilde{w}_{k})=\mathbb{E}[\widetilde{Y}_{j,it}|\widetilde{W}_{k,it}=\lambda_{k}]$. 
\begin{thm}
(PVARS: Randomized, continuous, unbounded Policy): \label{thm:(PVARS:-Randomised,-continuous,}Under
assumptions \ref{assu:(Continuous-differentiability)}, \ref{assu:(Normal-distribution)},
and for each $j$, $k$, $i$, and $t$, Panel Vector Autoregressions
capture the following estimand: 
\[
\gamma_{jk}=\int q(\lambda_{k})g'(\lambda_{k})d\lambda_{k}.
\]

\noindent Here $q(\lambda_{k})>0$ and $\int q(\lambda_{k})d\lambda_{k}=1$.
The weights are:
\[
\begin{aligned}q(\lambda_{k})=\frac{1}{\sigma_{\widetilde{W}_{k,it}}^{2}}\int_{-\infty}^{\infty}(\mathbb{E}[\widetilde{W}_{k,it}]F_{\widetilde{W}_{k,it}}(\lambda_{k})-\theta_{\widetilde{W}_{k,it}}(\lambda_{k})),\end{aligned}
\]

\noindent and $\Theta_{\widetilde{W}_{k,it}}(\lambda_{k})=\int_{-\infty}^{\lambda_{k}}mf_{\widetilde{W}_{k,it}}(m)dm=F_{\widetilde{W}_{k,it}}(\lambda_{k})\mathbb{E}[\widetilde{Y}_{j,it}|\tilde{W}_{k,it}=\lambda_{k}]$.
\end{thm}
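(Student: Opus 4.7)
The plan is to reduce the recursively identified PVAR coefficient to a univariate linear projection, write that projection as an expectation over the conditional expectation function $g(\lambda_{k})=\mathbb{E}[\widetilde{Y}_{j,it}\mid\widetilde{W}_{k,it}=\lambda_{k}]$, and then push the derivative onto $g$ by integration by parts. First, I would note that under the recursive ordering in which $\widetilde{W}_{k,it}$ precedes $\widetilde{Y}_{j,it}$, the PVAR estimand $\gamma_{jk}$ is exactly the population regression coefficient of $\widetilde{Y}_{j,it}$ on $\widetilde{W}_{k,it}$ (after the earlier contemporaneous variables have been partialled out, which is innocuous by construction of the reduced-form innovations). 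Since $\widetilde{W}_{k,it}\sim\mathcal{N}(0,\sigma^{2}_{\widetilde{W}_{k,it}})$, this projection coefficient equals $\operatorname{Cov}(\widetilde{Y}_{j,it},\widetilde{W}_{k,it})/\sigma^{2}_{\widetilde{W}_{k,it}}$, and by iterated expectations the covariance can be rewritten as $\int g(\lambda_{k})(\lambda_{k}-\mathbb{E}[\widetilde{W}_{k,it}])f_{\widetilde{W}_{k,it}}(\lambda_{k})\,d\lambda_{k}$.

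Second, I would integrate by parts with $u=g(\lambda_{k})$ and $dv=(\lambda_{k}-\mathbb{E}[\widetilde{W}_{k,it}])f_{\widetilde{W}_{k,it}}(\lambda_{k})\,d\lambda_{k}$. A direct antiderivative of $dv$ is $v(\lambda_{k})=\theta_{\widetilde{W}_{k,it}}(\lambda_{k})-\mathbb{E}[\widetilde{W}_{k,it}]F_{\widetilde{W}_{k,it}}(\lambda_{k})$, exactly the negative of the object appearing in the weight formula. Both limits $v(\pm\infty)$ vanish (at $-\infty$ trivially; at $+\infty$ because $\theta_{\widetilde{W}_{k,it}}(\infty)=\mathbb{E}[\widetilde{W}_{k,it}]$), so the boundary term drops and one obtains
\[
\operatorname{Cov}(\widetilde{Y}_{j,it},\widetilde{W}_{k,it})=\int g'(\lambda_{k})\bigl(\mathbb{E}[\widetilde{W}_{k,it}]F_{\widetilde{W}_{k,it}}(\lambda_{k})-\theta_{\widetilde{W}_{k,it}}(\lambda_{k})\bigr)\,d\lambda_{k}.
\]
Dividing by $\sigma^{2}_{\widetilde{W}_{k,it}}$ yields the representation in the statement with the weights $q(\lambda_{k})$ as defined.

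Third, I would verify that $q$ is a bona fide weighting kernel. Nonnegativity follows from rewriting the integrand as $F_{\widetilde{W}_{k,it}}(\lambda_{k})\bigl(\mathbb{E}[\widetilde{W}_{k,it}]-\mathbb{E}[\widetilde{W}_{k,it}\mid\widetilde{W}_{k,it}\le\lambda_{k}]\bigr)$, which is nonnegative for any distribution by the monotonicity of truncated means. For the normalization, I would specialize to the Gaussian case (where $\mathbb{E}[\widetilde{W}_{k,it}]=0$ and $\theta_{\widetilde{W}_{k,it}}(\lambda_{k})=-\sigma^{2}_{\widetilde{W}_{k,it}}f_{\widetilde{W}_{k,it}}(\lambda_{k})$), which gives $q(\lambda_{k})=f_{\widetilde{W}_{k,it}}(\lambda_{k})$ and so $\int q=1$; this simultaneously shows that $\gamma_{jk}=\mathbb{E}[g'(\widetilde{W}_{k,it})]$, recovering Stein's identity as a clean interpretation of the identified ACR.

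The main obstacle, and the step where Assumption \ref{assu:(Continuous-differentiability)} bites, is the boundary term in the integration by parts: I need $g(\lambda_{k})v(\lambda_{k})\to 0$ as $\lambda_{k}\to\pm\infty$. Because the Gaussian tails of $v(\lambda_{k})$ decay super-polynomially, a mild growth condition on $g$ (implicit in continuous differentiability together with finite variance of $\widetilde{Y}_{j,it}$) suffices, but this is the one place where regularity has to be invoked carefully rather than by appeal to purely algebraic manipulation.
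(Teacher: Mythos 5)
Your proposal follows essentially the same route as the paper's own proof: reduce $\gamma_{jk}$ to $\operatorname{Cov}(\widetilde{Y}_{j,it},\widetilde{W}_{k,it})/\sigma^{2}_{\widetilde{W}_{k,it}}$ via the Cholesky/projection argument (the paper's Lemma \ref{lemmaa1}), rewrite the covariance through iterated expectations as an integral against $g$, and integrate by parts with the same antiderivative $\theta_{\widetilde{W}_{k,it}}(\lambda_{k})-\mathbb{E}[\widetilde{W}_{k,it}]F_{\widetilde{W}_{k,it}}(\lambda_{k})$ and a vanishing boundary term. Your additional verification that $q(\lambda_{k})\geq 0$ via truncated means and that $\int q=1$ in the Gaussian case (recovering Stein's identity) is correct and actually supplies details the paper's proof asserts but does not check.
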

Here $f_{\widetilde{W}_{k}}$ indicates the marginal density, $F_{\widetilde{W}_{k}}$
the marginal cumulative distribution, $\sigma_{\widetilde{W}_{kit}}^{2}$
indicates the variance of $\widetilde{W}_{k,it}$. This theorem indicates
that the causal effect estimated is a set of weights $q(\lambda_{k})$
that depends only on the distribution of interest rates disturbances.
\begin{thm}
(PVARs identify ACRT):\label{thm:(PVARs-estimate-the-ACRT)}Under
assumptions \ref{assu:SUTVA}\ref{assu:(Continuous-differentiability)} and \ref{assu:(Normal-distribution)},
Panel Vector Autoregressions identify, for all $j$, $k$, $i$, $t$:

\[
\gamma_{jk}=\widetilde{\text{ACRT}}_{j,k,it}(\lambda_{k})
\]
\end{thm}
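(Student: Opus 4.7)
The plan is to combine the weighted-integral representation from Theorem \ref{thm:(PVARS:-Randomised,-continuous,} with the specific structure that Assumption \ref{assu:(Normal-distribution)} imposes on the weight function $q(\lambda_k)$. The argument splits into two stages: first, showing that under normality the weight collapses onto the density of $\widetilde{W}_{k,it}$, so that $\gamma_{jk}$ equals an expected ACRT; second, arguing that the linear PVAR specification forces the ACRT to be constant in $\lambda_k$, upgrading the averaged identity to the pointwise claim.

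For the first stage, I would take as given $\gamma_{jk} = \int q(\lambda_k)\, g'(\lambda_k)\,d\lambda_k$, where $g(\lambda_k) := \mathbb{E}[\widetilde{Y}_{j,it}\mid \widetilde{W}_{k,it}=\lambda_k]$. Because innovations are mean-zero, the term involving $\mathbb{E}[\widetilde{W}_{k,it}]$ in the expression for $q$ vanishes, leaving $q(\lambda_k) = -\Theta_{\widetilde{W}_{k,it}}(\lambda_k)/\sigma_{\widetilde{W}_{k,it}}^2$. Using the classical identity $\int_{-\infty}^{\lambda} m\,\phi_\sigma(m)\,dm = -\sigma^2 \phi_\sigma(\lambda)$ for a mean-zero normal density $\phi_\sigma$, it follows that $q(\lambda_k) = f_{\widetilde{W}_{k,it}}(\lambda_k)$. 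Substituting back gives $\gamma_{jk} = \mathbb{E}[g'(\widetilde{W}_{k,it})]$, which is precisely Stein's lemma applied to $\mathrm{Cov}(g(\widetilde{W}_{k,it}), \widetilde{W}_{k,it}) = \sigma_{\widetilde{W}_{k,it}}^2\,\mathbb{E}[g'(\widetilde{W}_{k,it})]$. By Assumption \ref{assu:SUTVA} and continuous differentiability (Assumption \ref{assu:(Continuous-differentiability)}), one identifies $g'(\lambda_k) = \widetilde{\text{ACRT}}_{j,k,it}(\lambda_k)$, since conditioning on $\widetilde{W}_{k,it}=\lambda_k$ and plugging in $\lambda_k$ into the potential-outcome function coincide under SUTVA.

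For the second stage, to upgrade from the averaged equality $\gamma_{jk} = \mathbb{E}[\widetilde{\text{ACRT}}_{j,k,it}(\widetilde{W}_{k,it})]$ to the pointwise statement, I would invoke the linear-projection nature of recursive PVAR identification: $\gamma_{jk}$ is the slope of the best linear predictor of $\widetilde{Y}_{j,it}$ on $\widetilde{W}_{k,it}$ after partialling out earlier variables in the ordering. Under marginal normality of $\widetilde{W}_{k,it}$ combined with the linear VAR structure, the conditional mean $g(\cdot)$ is affine in $\lambda_k$, so $g'(\lambda_k)$ is constant and equal to the same $\gamma_{jk}$ at every $\lambda_k$.

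The main obstacle I anticipate is precisely this last step. Assumption \ref{assu:(Normal-distribution)} only imposes marginal normality of the policy innovation, which on its own does not deliver a linear conditional expectation; one has to spell out carefully how the linear dynamic specification of the PVAR, together with normality of the innovation whose coefficient is being identified by the recursive scheme, is what produces an affine $g$. Once that linearity is made precise, the pointwise identity $\gamma_{jk} = \widetilde{\text{ACRT}}_{j,k,it}(\lambda_k)$ follows immediately from the Stein-type collapse established in the first stage.
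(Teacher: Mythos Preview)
Your approach is essentially the same as the paper's: both start from the weighted-integral representation of Theorem~\ref{thm:(PVARS:-Randomised,-continuous,}, use the partial-expectation identity for a centred normal to collapse $q(\lambda_k)$ to the normal density $f_{\widetilde W_{k,it}}(\lambda_k)$, and then pass from $\int f_{\widetilde W_{k,it}}(\lambda_k)\,g'(\lambda_k)\,d\lambda_k$ to the pointwise value $g'(\lambda_k)$ by treating $g'$ as constant. Your Stein's-lemma framing is simply a cleaner way to say the same thing the paper does by direct substitution.

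Where you add something is your second stage. The paper's proof handles the constancy step in a single clause---``the density of $g'(\lambda_k)$ does not depend on $\lambda_k$''---without further argument, which is exactly the point you flag as the main obstacle. Your instinct that marginal normality of $\widetilde W_{k,it}$ alone does not force $g$ to be affine is correct; the paper does not supply any additional justification beyond what you already sketch (the implicit linearity of the recursive PVAR specification). So your proposal is at least as complete as the paper's own proof, and your explicit discussion of why $g'$ should be constant is a genuine improvement in rigor rather than a deviation in strategy.
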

This result is partially similar to the one of \citealp{Yitazaki1996}.
The ACRT is of particular interest since it captures the effect on
the forecast error of the outcome variable by moving along the curve
of observed forecast errors of the policy. This first derivative interpretation,
however, possess a selection bias in $g(\widetilde{w}_{k})$, which
depends on the conditioning argument $\widetilde{W}_{k,it}=\lambda_{k}$.
This is where the assumption that the monetary policy is fully exogenous
with respect to inflation can help obtain a causal effect without
a selection bias.
\begin{thm}
(PVARs identify ACR): \label{thm:(PVARs-identify-ACR):} Under assumptions \ref{assu:SUTVA},\ref{assu:(Policy-variable-is-random)},\ref{assu:(Continuous-differentiability)},
and \ref{assu:(Normal-distribution)}, 
\[
\gamma_{jk}=\widetilde{\text{ACR}}_{jk,it}(\lambda_{k}).
\]
\end{thm}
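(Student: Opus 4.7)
The plan is to obtain the result as a one-step corollary of Theorem \ref{thm:(PVARs-estimate-the-ACRT)}: namely, by showing that the full randomization Assumption \ref{assu:(Policy-variable-is-random)} makes the selection bias term that distinguishes ACRT from ACR vanish identically in $\lambda_k$.

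First, I would invoke Theorem \ref{thm:(PVARs-estimate-the-ACRT)}, which, under continuous differentiability and normality of the policy innovation, already delivers $\gamma_{jk}=\widetilde{\text{ACRT}}_{jk,it}(\lambda_k)=\delta\mathbb{E}[\widetilde{Y}_{j,it}(\lambda_k)\mid \widetilde{W}_{k,it}=\lambda_k]/\delta\lambda_k$. The identity between $\mathbb{E}[\widetilde{Y}_{j,it}\mid\widetilde{W}_{k,it}=\lambda_k]$ and $\mathbb{E}[\widetilde{Y}_{j,it}(\lambda_k)\mid\widetilde{W}_{k,it}=\lambda_k]$ follows from consistency together with SUTVA (Assumption \ref{assu:SUTVA}), so that on the event $\{\widetilde{W}_{k,it}=\lambda_k\}$ the observed forecast error of $Y_j$ coincides with the potential outcome $\widetilde{Y}_{j,it}(\lambda_k)$.

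Second, I would use Assumption \ref{assu:(Policy-variable-is-random)}, which asserts joint independence of $\widetilde{W}_{k,it}$ from the entire family of potential outcomes $\{\widetilde{Y}_{j,it}(1):1\in\mathcal{W}^{t:T}\}$. Fixing any $\lambda_k\in\mathcal{W}_k$, this implies $\widetilde{Y}_{j,it}(\lambda_k)\perp \widetilde{W}_{k,it}$, and hence
\[
\mathbb{E}[\widetilde{Y}_{j,it}(\lambda_k)\mid \widetilde{W}_{k,it}=\lambda_k]=\mathbb{E}[\widetilde{Y}_{j,it}(\lambda_k)] \qquad \text{for all } \lambda_k\in\mathcal{W}_k.
\]
This is the continuous-policy analog of the $\Delta_{j,k,i,t}(1,0)=0$ step carried out in Theorem \ref{thm:(PVARs-estimate-ATEs).}.

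Third, since the right-hand side is continuously differentiable in $\lambda_k$ by Assumption \ref{assu:(Continuous-differentiability)}, differentiating both sides with respect to $\lambda_k$ gives $\widetilde{\text{ACRT}}_{jk,it}(\lambda_k)=\delta\mathbb{E}[\widetilde{Y}_{j,it}(\lambda_k)]/\delta\lambda_k=\widetilde{\text{ACR}}_{jk,it}(\lambda_k)$, and substituting into the conclusion of Theorem \ref{thm:(PVARs-estimate-the-ACRT)} yields $\gamma_{jk}=\widetilde{\text{ACR}}_{jk,it}(\lambda_k)$, as required. The main obstacle is the second step: one must be careful that Assumption \ref{assu:(Policy-variable-is-random)} is strong enough to imply independence pointwise at \emph{every} $\lambda_k$ at which we differentiate (not just in the marginal sense), and that the derivative can be exchanged with the conditioning operator — both of which are delivered by the joint-independence statement together with the continuous-differentiability regularity condition already assumed.
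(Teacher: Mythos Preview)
Your proposal is correct and follows essentially the same route as the paper: both start from Theorem \ref{thm:(PVARs-estimate-the-ACRT)} to obtain $\gamma_{jk}=\widetilde{\text{ACRT}}_{jk,it}(\lambda_k)$ and then use the randomization Assumption \ref{assu:(Policy-variable-is-random)} to eliminate the selection bias separating ACRT from ACR. The only cosmetic difference is that the paper writes the bias explicitly as a covariance term $\frac{cov(\widetilde{Y}_{j}(\lambda_{k}),1\{\widetilde{W}_{k}=\lambda_{k}\})}{var(1\{\widetilde{W}_{k}=\lambda_{k}\})}$ and zeroes it, whereas you argue directly that independence forces $\mathbb{E}[\widetilde{Y}_{j,it}(\lambda_k)\mid\widetilde{W}_{k,it}=\lambda_k]=\mathbb{E}[\widetilde{Y}_{j,it}(\lambda_k)]$ before differentiating---which is arguably cleaner for a continuous $\widetilde{W}_k$.
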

Notice that $\widetilde{\text{ACR}}_{jk,it}(\lambda_{k})$ is free
of any selection bias, and allows to formulate counterfactual policy
scenarios. If one truly believes in the assumption that monetary policy
innovations are fully random with respect to inflation, it becomes
possible to identify the effects of any impact shock in monetary policy
on inflation.

\subsection{The policy variable is continuous and non negative\label{subsec:policy-non-negative}}

A researcher is, once again, interested in the evaluation of the impact
of hurricanes in the economy of some states. This time they make the
argument that not all hurricanes impact the economy in the same way.
Some create more destruction; some simply affect mildly the economy;
some are technically registered, but have no effect on the economy.
Therefore, they collect the value of all the destroyed property. The
periods in which there are no hurricanes are registered as zero, and
the periods in which they happen, the destruction measure is instead
utilized. Therefore, the treatment variable is either positive or
zero. Such a restriction is the same one proposed by \citealp{callawaysantannabacon2021}
of either continuous or multi-valued treatments. 
\begin{assumption}
(Continuous differentiability): \label{assu:(Continuous-differentiability)-1}
The forecast errors of the outcome variables $\widetilde{Y}_{j,it}$
are continuously differentiable with respect to $\widetilde{W}_{k,it}$.
\end{assumption}
\begin{assumption}
(Non-negative $\widetilde{W}_{k,it}$):\label{assu:(Greater-than-zero)}
$\widetilde{W}_{k,it}$ satisfies $0\leq\widetilde{W}_{k,it}<\infty$
for all $i$ and all $t$. 
\end{assumption}
\begin{assumption}
(Strong Parallel Trends): \label{assu:(Strong-Parallel-Trend)} For
all $j$, $k$, $i$, $t$ 
\[
\mathbb{E}[\widetilde{Y}_{j,it}(\lambda_{k})-\widetilde{Y}_{j,it}(0)]=\mathbb{E}[\widetilde{Y}_{j,it}(\lambda_{k})-\widetilde{Y}_{j,it}(0)|\widetilde{W}_{k,it}=\lambda_{k}].
\]
\end{assumption}
Assumption \ref{assu:(Continuous-differentiability)-1} is akin to
the one in the previous subsection. Assumption \ref{assu:(Greater-than-zero)}
regularizes the limits of the distribution of the treatment variable.
Assumption \ref{assu:(Strong-Parallel-Trend)} is instead used by
\citealp{callawaysantannabacon2021} to restrict the treatment variable
to have the same potential effect for the treated units and control
units. Such an assumption imposes that receiving a treatment of size
$\lambda_{k}$ compared to receiving no treatment (a form of ATE)
is the same quantity as conditioning on the group of treated units.
It is a convenient, but strong, assumption that allows to freely move
from ATEs to ATTs. It is required to only hold for $\lambda_{k}$
and zero potential outcomes as this is the source of selection bias.

The causal effect can be computed from the Cholesky decomposition
of the estimated residuals, and will represent:
\begin{thm}
(PVARS: Randomized, continuous, non-negative Policy): \label{thm:(PVARs-estimate-weighted-average)}
Under assumption \ref{assu:SUTVA}\ref{assu:(Continuous-differentiability)-1}, Panel
Vector Autoregressions capture the following estimand, for all $j$,
all $k$, all $t$ and all $i$:
\[
\begin{aligned}\gamma_{jk}= & \int_{d_{L}}^{d_{U}}q_{1}(\lambda_{k})\left(\frac{\delta\mathbb{E}[\widetilde{Y}_{j,it}|\widetilde{W}_{k,it}=\lambda_{k}]}{d\lambda_{k}}+\frac{\mathbb{E}[\widetilde{Y}_{j,it}|\widetilde{W}_{k,it}=\lambda_{k}]-\mathbb{E}[\widetilde{Y}_{j,it}|\widetilde{W}_{k,it}=a]}{\delta a}\mid_{a=\lambda_{k}}\right)d\lambda_{k}+\\
 & +q_{0}\frac{\mathbb{E}[\widetilde{Y}_{j,it}|\widetilde{W}_{k,it}=d_{L}]-\mathbb{E}[\widetilde{Y}_{j,it}|\widetilde{W}_{k,it}=0]}{d_{L}}.
\end{aligned}
\]

\noindent Here 
\[
\frac{\mathbb{E}[\widetilde{Y}_{j,it}|\widetilde{W}_{k,it}=\lambda_{k}]-\mathbb{E}[\widetilde{Y}_{j,it}|\widetilde{W}_{k,it}=a]}{\delta a}\mid_{a=\lambda_{k}}
\]
is a selection bias term and
\[
\begin{aligned}q_{1}(\lambda):=\frac{\mathbb{E}[\widetilde{W}_{k,it}|\widetilde{W}_{k,it}\geq\lambda_{k}]-\mathbb{E}[\widetilde{W}_{k,it}])\mathbb{P}(\widetilde{W}_{k,it}\geq\lambda_{k})}{var(\widetilde{W}_{k,it})}\end{aligned}
\]

\noindent and
\[
q_{0}:=\frac{(\mathbb{E}[\widetilde{W}_{k,it}|\widetilde{W}_{k,it}>0]-\mathbb{E}[\widetilde{W}_{k,it}])\mathbb{P}(\widetilde{W}_{k,it}>0)d_{L}}{var(\widetilde{W}_{k,it})}
\]

\noindent and the weights satisfy $\int_{d_{L}}^{d_{U}}q_{1}(\lambda_{k})d\lambda_{k}+q_{0}=1$.

Under assumption \ref{assu:(Strong-Parallel-Trend)} it becomes
\[
\begin{aligned}\gamma_{jk}= & \int_{d_{L}}^{d_{U}}q_{1}(\lambda_{k})\frac{\delta\mathbb{E}[\widetilde{Y}_{j,it}|\widetilde{W}_{k,it}=\lambda_{k}]}{d\lambda_{k}}d\lambda_{k}+q_{0}\frac{\mathbb{E}[\widetilde{Y}_{j,it}|\widetilde{W}_{k,it}=d_{L}]-\mathbb{E}[\widetilde{Y}_{j,it}|\widetilde{W}_{k,it}=0]}{d_{L}}\end{aligned}
\]
\end{thm}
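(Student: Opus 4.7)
The approach is to exploit the fact that, under recursive (Cholesky) identification, the contemporaneous coefficient $\gamma_{jk}$ is the OLS slope of $\widetilde{Y}_{j,it}$ on $\widetilde{W}_{k,it}$ once the preceding policies have been orthogonalized out. Hence $\gamma_{jk} = \mathrm{Cov}(\widetilde{Y}_{j,it}, \widetilde{W}_{k,it})/\mathrm{Var}(\widetilde{W}_{k,it})$, and the proof reduces to rewriting this ratio in the stated form. Writing $g(\lambda):=\mathbb{E}[\widetilde{Y}_{j,it}\mid \widetilde{W}_{k,it}=\lambda]$, the law of iterated expectations gives $\mathrm{Cov}(\widetilde{Y}_{j,it},\widetilde{W}_{k,it})=\mathbb{E}[(\widetilde{W}_{k,it}-\mathbb{E}[\widetilde{W}_{k,it}])g(\widetilde{W}_{k,it})]$, and Assumption~\ref{assu:(Greater-than-zero)} decomposes the distribution of $\widetilde{W}_{k,it}$ into an atom at $0$ of mass $p_{0}:=\mathbb{P}(\widetilde{W}_{k,it}=0)$ and a density $f$ on $[d_{L},d_{U}]$, so that this expectation splits into an integral over $[d_{L},d_{U}]$ minus $p_{0}\,\mathbb{E}[\widetilde{W}_{k,it}]\,g(0)$.

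The continuous integral is attacked by a Yitzhaki-style integration by parts. Setting $v(\lambda):=-(\mathbb{E}[\widetilde{W}_{k,it}\mid\widetilde{W}_{k,it}\ge\lambda]-\mathbb{E}[\widetilde{W}_{k,it}])\mathbb{P}(\widetilde{W}_{k,it}\ge\lambda)$ I verify that $v'(\lambda)=(\lambda-\mathbb{E}[\widetilde{W}_{k,it}])f(\lambda)$ and $v(d_{U})=0$; Assumption~\ref{assu:(Continuous-differentiability)-1} allows me to differentiate $g$, so that integration by parts converts $\int_{d_{L}}^{d_{U}}(\lambda-\mathbb{E}[\widetilde{W}_{k,it}])g(\lambda)f(\lambda)\,d\lambda$ into $\mathrm{Var}(\widetilde{W}_{k,it})\int_{d_{L}}^{d_{U}}q_{1}(\lambda)g'(\lambda)\,d\lambda$ plus a boundary contribution at $d_{L}$ equal to $g(d_{L})[\mathbb{E}[\widetilde{W}_{k,it}\mid\widetilde{W}_{k,it}>0]-\mathbb{E}[\widetilde{W}_{k,it}]]\mathbb{P}(\widetilde{W}_{k,it}>0)$. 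Using the identity $\mathbb{E}[\widetilde{W}_{k,it}]=\mathbb{P}(\widetilde{W}_{k,it}>0)\,\mathbb{E}[\widetilde{W}_{k,it}\mid\widetilde{W}_{k,it}>0]$, this boundary piece combines with the atom term $-p_{0}\mathbb{E}[\widetilde{W}_{k,it}]g(0)$ to yield exactly $q_{0}[g(d_{L})-g(0)]/d_{L}$ after dividing through by $\mathrm{Var}(\widetilde{W}_{k,it})$. A Fubini computation on $\int_{d_{L}}^{d_{U}}q_{1}(\lambda)d\lambda$ then confirms the normalization $\int_{d_{L}}^{d_{U}}q_{1}(\lambda)d\lambda+q_{0}=1$.

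To reach the first displayed form with the explicit selection term inside the integral, I invoke Assumption~\ref{assu:SUTVA} to write $g(\lambda)=\mathbb{E}[\widetilde{Y}_{j,it}(\lambda)\mid \widetilde{W}_{k,it}=\lambda]$ and apply the chain rule to the bivariate map $(\lambda_{1},\lambda_{2})\mapsto\mathbb{E}[\widetilde{Y}_{j,it}(\lambda_{1})\mid\widetilde{W}_{k,it}=\lambda_{2}]$ along the diagonal $\lambda_{1}=\lambda_{2}=\lambda_{k}$. The first partial is the structural dose-response derivative, while the second partial is precisely the selection-in-the-conditioning-argument term stated in the theorem; by construction their sum equals the total derivative $g'(\lambda_{k})$ used above. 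Passing to the simplified second expression then invokes Assumption~\ref{assu:(Strong-Parallel-Trend)}: following Callaway--Sant'Anna--Bacon, Strong Parallel Trends equates conditional and unconditional treatment-vs-control contrasts, which forces the selection partial to vanish along the diagonal and leaves only the causal derivative inside the integral.

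The main obstacle is the bookkeeping around the mixed atom-plus-density distribution: the boundary term produced by integration by parts at $d_{L}$ must be algebraically merged with the atom contribution to produce the clean $q_{0}$ coefficient, and the identity relating $\mathbb{E}[\widetilde{W}_{k,it}]$, $p_{0}$ and $\mathbb{E}[\widetilde{W}_{k,it}\mid\widetilde{W}_{k,it}>0]$ is what makes this reconciliation work without spurious leftover terms. A secondary subtlety is justifying that Assumption~\ref{assu:(Strong-Parallel-Trend)} alone annihilates the selection derivative without a separate ignorability condition on $\widetilde{Y}_{j,it}(0)$; this requires carefully spelling out that SPT constrains the conditional ATE at intensity $\lambda_{k}$ and that the complementary piece collapses at the diagonal where the two arguments of the bivariate expectation coincide.
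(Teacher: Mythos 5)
Your proposal is correct, and it reaches the same weights $q_{1}$ and $q_{0}$ by what is at bottom the same Yitzhaki-style argument: both you and the paper start from Lemma \ref{lemmaa1}, i.e.\ $\gamma_{jk}=Cov(\widetilde{Y}_{j,it},\widetilde{W}_{k,it})/var(\widetilde{W}_{k,it})$, and both exploit Assumption \ref{assu:(Continuous-differentiability)-1} to trade the level $g$ for its derivative $g'$. The mechanics differ, though. The paper first subtracts $m(0)$ (using that the centered regressor is orthogonal to constants), conditions on $\widetilde{W}_{k,it}>0$, adds and subtracts $m(d_{L})$ to split the estimand into $A_{1}+A_{2}$, and then applies the fundamental theorem of calculus plus a change in the order of integration to turn $A_{1}$ into $\int q_{1}m'$. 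You instead split the mixed distribution explicitly into the atom at zero and the density on $[d_{L},d_{U}]$, integrate by parts against the antiderivative $v(\lambda)=-(\mathbb{E}[\widetilde{W}\mid\widetilde{W}\ge\lambda]-\mathbb{E}[\widetilde{W}])\mathbb{P}(\widetilde{W}\ge\lambda)$, and merge the boundary term at $d_{L}$ with the atom contribution via $(\mathbb{E}[\widetilde{W}\mid\widetilde{W}>0]-\mathbb{E}[\widetilde{W}])\mathbb{P}(\widetilde{W}>0)=p_{0}\mathbb{E}[\widetilde{W}]$; I checked this identity and your bookkeeping is right, so the two routes are dual (FTC-plus-Fubini versus integration by parts) and land on identical expressions. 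Your version buys a cleaner treatment of the mass point, at the cost of needing $v(d_{U})\to 0$, which requires the variance to exist when $d_{U}=\infty$ (the paper makes the analogous remark in the proof of Theorem \ref{thm:(PVARS:-Randomised,-continuous,}). You also go further than the paper's own proof on two points it leaves implicit: the decomposition of the total derivative $g'(\lambda_{k})$ into the structural dose-response partial plus the selection partial of the bivariate map $(\lambda_{1},\lambda_{2})\mapsto\mathbb{E}[\widetilde{Y}_{j,it}(\lambda_{1})\mid\widetilde{W}_{k,it}=\lambda_{2}]$ along the diagonal, and the observation that Assumption \ref{assu:(Strong-Parallel-Trend)} is exactly what kills the selection partial; the paper's appendix jumps directly to the simplified display and relegates that decomposition to the theorem statement, so your added detail is a genuine improvement rather than a deviation.
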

This approach decomposes the estimated coefficient in different partitions
of a curve of causal effects, so that $\gamma_{jk}$ will be a weighted
average of an $\text{ACRT}_{j,k,it}(\lambda)$ and an $\text{ATE}_{j,k,it}(d_{L})$.
\begin{thm}
(PVARs identify a weighted average of ACR and ATE): \label{thm:(PVARs-estimate-average-ATE-ACR)}
The result in Theorem \ref{thm:(PVARs-estimate-weighted-average)}
can be expressed as a weighted average of ACR and ATEs, hence Panel
Vector Autoregressions identify, for all $j$, $k$, $t$, $h$:

\[
\gamma_{jk}=\int_{d_{L}}^{d_{U}}q_{1}(\lambda_{k})\widetilde{\text{ACRT}}_{j,k,it}(\lambda_{k})+q_{0}\frac{\widetilde{\text{ATE}}_{j,k,it}(d_{L})}{d_{L}},
\]

\noindent which collapses to a different range of causal effects.
\end{thm}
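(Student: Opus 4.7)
The plan is to treat Theorem \ref{thm:(PVARs-estimate-average-ATE-ACR)} as a direct relabeling of the simplified form obtained in Theorem \ref{thm:(PVARs-estimate-weighted-average)} under Strong Parallel Trends, identifying each of its two pieces with one of the causal estimands listed in Definition \ref{def:(ATTs)}. Concretely, I would start from
\[
\gamma_{jk}=\int_{d_{L}}^{d_{U}}q_{1}(\lambda_{k})\frac{\delta\mathbb{E}[\widetilde{Y}_{j,it}|\widetilde{W}_{k,it}=\lambda_{k}]}{d\lambda_{k}}d\lambda_{k}+q_{0}\frac{\mathbb{E}[\widetilde{Y}_{j,it}|\widetilde{W}_{k,it}=d_{L}]-\mathbb{E}[\widetilde{Y}_{j,it}|\widetilde{W}_{k,it}=0]}{d_{L}}
\]
and process the integral and the discrete pieces separately, carrying through the weight-normalization $\int_{d_{L}}^{d_{U}}q_{1}(\lambda_{k})d\lambda_{k}+q_{0}=1$ that was already established in that theorem.

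For the integral term, consistency implied by SUTVA (Assumption \ref{assu:SUTVA}) gives $\widetilde{Y}_{j,it}=\widetilde{Y}_{j,it}(\widetilde{W}_{k,it})$ almost surely, so on the event $\{\widetilde{W}_{k,it}=\lambda_{k}\}$ one has $\mathbb{E}[\widetilde{Y}_{j,it}|\widetilde{W}_{k,it}=\lambda_{k}]=\mathbb{E}[\widetilde{Y}_{j,it}(\lambda_{k})|\widetilde{W}_{k,it}=\lambda_{k}]$. Differentiating in $\lambda_{k}$, justified by the continuous differentiability of Assumption \ref{assu:(Continuous-differentiability)-1}, produces exactly $\widetilde{\text{ACRT}}_{j,k,it}(\lambda_{k})$ as defined in Definition \ref{def:(ATTs)}, so the integrand collapses to $q_{1}(\lambda_{k})\widetilde{\text{ACRT}}_{j,k,it}(\lambda_{k})$.

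For the discrete term, I would apply consistency at each conditioning value to rewrite the numerator as $\mathbb{E}[\widetilde{Y}_{j,it}(d_{L})|\widetilde{W}_{k,it}=d_{L}]-\mathbb{E}[\widetilde{Y}_{j,it}(0)|\widetilde{W}_{k,it}=0]$, add and subtract $\mathbb{E}[\widetilde{Y}_{j,it}(0)|\widetilde{W}_{k,it}=d_{L}]$, and read the two resulting pieces as an ATT at dose $d_{L}$ plus a missing-counterfactual selection residual. Strong Parallel Trends (Assumption \ref{assu:(Strong-Parallel-Trend)}) applied at $\lambda_{k}=d_{L}$ converts the ATT into $\widetilde{\text{ATE}}_{j,k,it}(d_{L})=\mathbb{E}[\widetilde{Y}_{j,it}(d_{L})-\widetilde{Y}_{j,it}(0)]$; in the stronger mean-independence reading that the previous theorem already relied on to eliminate the selection bias of the derivative, the same condition forces the residual to zero. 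Dividing by $d_{L}$ yields the term $q_{0}\widetilde{\text{ATE}}_{j,k,it}(d_{L})/d_{L}$, and summing the two pieces produces the claimed weighted-average representation.

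The main conceptual obstacle is not algebraic but hinges on using Strong Parallel Trends consistently with how it was invoked in Theorem \ref{thm:(PVARs-estimate-weighted-average)}: the assumption has to simultaneously kill two selection objects at $d_{L}$, namely the ATT-vs-ATE gap and the missing-counterfactual term, which requires reading Assumption \ref{assu:(Strong-Parallel-Trend)} in the mean-independence sense of \citet{callawaysantannabacon2021} rather than merely as an equality at a single dose. Once this convention is fixed, the result is a dictionary translation of Theorem \ref{thm:(PVARs-estimate-weighted-average)} and no new computation is required.
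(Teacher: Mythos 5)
Your proposal is correct and follows essentially the same route as the paper, whose proof is literally the one-line remark that the result is ``immediate from nesting Definition \ref{def:(ATTs)} and Theorem \ref{thm:(PVARs-estimate-weighted-average)}''; you simply spell out that nesting. Your added observation that Assumption \ref{assu:(Strong-Parallel-Trend)} must be read in the mean-independence sense to also eliminate the missing-counterfactual term $\mathbb{E}[\widetilde{Y}_{j,it}(0)|\widetilde{W}_{k,it}=d_{L}]-\mathbb{E}[\widetilde{Y}_{j,it}(0)|\widetilde{W}_{k,it}=0]$ at the lowest dose is a legitimate subtlety that the paper's relabeling glosses over, and your treatment is the more careful one.
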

This result can be interpreted as a negative one. Even under strong
assumptions, the asymmetry of the distribution can result in a mix
of causal effects, which may be of hard interpretation. Moreover,
the causal interpretation of $\widetilde{\text{ACRT}}(\lambda_{k})$
may still be problematic due to the presence of a selection bias term.
A key difference with respect to \citealp{callawaysantannabacon2021}
is that while they consider other possible forms of estimators for
quantities that may be informative about the shape of the causal response,
the approach of PVARs has an additional downside. The researcher has
no control over the distribution of the innovations of the policy
variable, making other alternative estimators that appropriately weight
the observations difficult to consider. 

What does this mean for the researcher that collected information
about the negative impact of hurricanes? Intuitively, this results
speaks about the exercise of trying to capture different types of
information within a unique coefficient. In this case, $\gamma_{jk}$
is being asked to capture both the impact of moving from a non-event
to the lowest destructive hurricane, but also the one of moving from
the lowest possible hurricane to the second lowest possible and so
on. This is not necessarily a unique coefficient. In the case of \citealp{callawaysantannabacon2021}
the consequence is that it becomes possible to move from a parametric
to a non-parametric estimation, and capture the different causal effects
through the estimation of different coefficients. 

This issue opens up several venues for research. While currently PVARs
are identified by simply running a recursive ordering based estimation,
it could be possible to consider different estimators that select
different observations due to their common distribution. In that case,
rather than obtaining just one Impulse Response Function, the researcher
would have in hand several Impulse Responses, each capturing the impact
of an intervention of, say, a treatment 1 versus a treatment 0; or
a treatment 2 versus a treatment 0; or a treatment 2 versus a treatment
1; each referring to a different Cholesky decomposition that refers
to commonly distributed residuals. Unfortunately such estimators generally
require a large amount of data to cover all the different densities,
which may be reasonable in the case of weekly or monthly macroeconomic
data, but unfeasible in the case of quarterly or yearly data. In general,
however, the issue of non-linearity is becoming more and more apparent
in many macroeconomic estimators (see \citet{kolesar2024dynamic}).

\section{The policy variable is a dummy that indicates treated or untreated
units\label{subsec:policy-heterog}}

A researcher is, for the last time, interested in the evaluation of
the impact of hurricanes to the economy of some states. This time
they observe that hurricanes affect only some regions, leaving others
untouched. They therefore notice that the regions not affected by
hurricanes can be used as a counterfactual for the ones that are impacted.
They make the case that, by considering the forecast errors, they
are capable of eliminating the non-immediate effect of the disaster.
For example, if a region is particularly affected, the autoregressive
coefficients will capture this shift, and eliminate the long term
impact of hurricanes.

For this reason, if the units can contemporaneously be extracted to
be treated/controls, the burden of generating a counterfactual will
be both on the forecasts and on the other untreated units. Consider
the following structure for the policy:
\begin{assumption}
(Policy is dummy across heterogeneous units): \label{assu:(Policy-is-dummy}
Units can be divided into four subgroups, respectively 
\[
\begin{cases}
\widetilde{W}_{k,it}=1 & \text{ for }i\in I_{P}\text{ at times }t\in T_{P}\\
\widetilde{W}_{k,it}=0 & \text{ for }i\in I_{C}\text{ at times }t\in T_{P}\\
\widetilde{W}_{k,it}=0 & \text{ for }i\in I_{P}\text{ at times }t\in T_{C}\\
\widetilde{W}_{k,it}=0 & \text{ for }i\in I_{C}\text{ at times }t\in T_{C}
\end{cases}
\]
\end{assumption}
Here $I_{P}$ represents a subgroup of units which are treated, $I_{C}$
represents a subgroup of units which are not treated, $T_{P}$ represents
the subgroup of times that are treated, and $T_{C}$ represents the
subgroup of times that are not treated.\footnote{The $P$ notation is used to indicate ``policy'' times or ``policy''
affected units.} It follows that $I_{P}+I_{C}=I$ and $T_{P}+T_{C}=T$.

A consequence of assumption \ref{assu:(Policy-is-dummy} is that there
exist at least a control unit for times in which some units are subject
to a treatment. It is necessary in order to guarantee that $\widetilde{W}_{k,it}$
is a dummy that also includes non treated units that can act as contemporaneous
counterfactuals for the treated group.\footnote{Notice that this assumption also relates deeply to the results of
\citet{callawaysantannamultipletime2021}, which argue against fixed
effects estimation specifically in the case in which there are no
more control units after a series of treatments in the staggered treatment
framework. Assumption \ref{assu:(Policy-is-dummy} avoids such instances
completely by assuming that there are sufficient counterfactuals contemporaneously.}
\begin{assumption}
\label{assu:(Parallel-trends):}(Parallel trends): For each $j\geq1$,
$k\geq1$,$t\geq1$,$i\geq1$:
\[
\mathbb{E}[\widetilde{Y}_{j,it}(0)|t\in T_{P},i\in I_{P}]=\mathbb{E}[\widetilde{Y}_{j,it}(0)|t\in T_{P},i\in I_{C}].
\]
\end{assumption}
\begin{assumption}
\label{assu:(No-anticipation)}(No anticipation): For each $j\geq1$,
$k\geq1$,$t\geq1$$i\geq1$:
\[
\mathbb{E}[\widetilde{Y}_{j,it}(0)|t\in T_{C},i\in I_{P}]=\mathbb{E}[\widetilde{Y}_{j,it}(0)|t\in T_{C},i\in I_{C}].
\]
\end{assumption}
\begin{assumption}
\label{assu:(No-residual-autocorrelation):}(No residual autocorrelation):
For each $j\ge1$, $k\geq1$, $t\geq1$,,$i\geq1$ and for $s\geq1$: 
\[
corr(\widetilde{x}_{j,t},\widetilde{x}_{1,t-s}),..,corr(\widetilde{x}_{j,t},\widetilde{x}_{J,t-s})=0.
\]
\end{assumption}
\begin{assumption}
\label{assu:(No-contamination):}(No contamination):
For each $k\geq1$, $t\geq1$,$i\geq1$: 
\[
\widetilde{W}_{k,it} \perp \widetilde{W}_{-k,it}
\]
\end{assumption}

Assumption \ref{assu:(Parallel-trends):} implies that the treated
and non treated units are comparable in potential outcomes in treated
times. This is akin to a parallel trend assumption in the DiD literature,
with the difference that it is imposed in the forecast errors. Assumption
\ref{assu:(No-anticipation)} means that units belonging to the treatment
and control group have identical potential outcomes in non treated
times. Both of those assumptions are standard in the DiD literature.
Notice that an implication of the two assumptions is that they may
be violated in the case of spillover effects.

Assumption \ref{assu:(No-residual-autocorrelation):} states that
there must be no residual autocorrelation in the innovations of the
outcome variables. It is required to avoid the contamination of previous
causal effects or anticipatory effects. This is deeply rooted with
the idea of stationarity and its reliability strictly depends on the
nature of the data and the appropriate additions of dummies to control
for time series confounding factors in the PVAR regression, lag selection,
and the overall nature of the estimated system. This is because the
idea of this type of identification procedure is that any unit, at
any time, is allowed to be treated. 

Finally,  assumption \ref{assu:(No-contamination):} assumes independence across multiple treatments.  For example, if the economist is analysing the effects of two types of natural disasters, such as heatwaves and droughts, the assumption implies no sytematic correlation across the two assignments. \footnote{Notice that this assumption does still allow for assignments to be temporally related.  Consider the example of \citet{usman2024going}, which eliminate regions affected by multiple, subsequent, heterogeneous shocks, such as heatwaves and droughts, from the control sample.  From the point of view of a panel VAR, as long as the disasters are not \textit{contemporaneously} correlated, the occurrence of droughts after heatwaves will not contaminate inference as long as it is estimated by the AR coefficients of the two shocks.  Rather, the systematic occurrence of disasters at the same time could indeed contaminate inference, as there would be no way to distinguish the two dummy events.}

To understand the difference with respect to traditional causal inference
settings in panel data, such as DiD, staggered-DiD and LP-DiD, consider
Table \ref{tab:Treatment-scenarios-under}. 

\begin{table}
\centering{}{\small{}}%
\begin{tabular}{c|cc|ccc|ccc|cc}
 & \multicolumn{2}{c|}{{\footnotesize{}DiD}} & \multicolumn{3}{c|}{{\footnotesize{}Staggered DiD}} & \multicolumn{3}{c|}{{\footnotesize{}LP-DiD}} & \multicolumn{2}{c}{{\footnotesize{}PVAR Residuals}}\tabularnewline
\cline{2-11} \cline{3-11} \cline{4-11} \cline{5-11} \cline{6-11} \cline{7-11} \cline{8-11} \cline{9-11} \cline{10-11} \cline{11-11} 
{\footnotesize{}Units} & {\footnotesize{}Group 1} & {\footnotesize{}Group 2} & {\footnotesize{}Group 1} & {\footnotesize{}Group 2} & {\footnotesize{}Group 3} & {\footnotesize{}Group 1} & {\footnotesize{}Group 2} & {\footnotesize{}Group 3} & {\footnotesize{}Group 1} & {\footnotesize{}Group 2}\tabularnewline
\hline 
{\footnotesize{}$t=1$} & {\footnotesize{}0} & {\footnotesize{}0} & {\footnotesize{}0} & {\footnotesize{}0} & {\footnotesize{}0} & {\footnotesize{}0} & {\footnotesize{}0} & {\footnotesize{}0} & {\footnotesize{}0} & {\footnotesize{}1}\tabularnewline
{\footnotesize{}$t=2$} & {\footnotesize{}0} & {\footnotesize{}0} & {\footnotesize{}1} & {\footnotesize{}0} & {\footnotesize{}0} & {\footnotesize{}1} & {\footnotesize{}0} & {\footnotesize{}0} & {\footnotesize{}1} & {\footnotesize{}0}\tabularnewline
{\footnotesize{}$t=3$} & {\footnotesize{}1} & {\footnotesize{}0} & {\footnotesize{}1} & {\footnotesize{}1} & {\footnotesize{}0} & {\footnotesize{}1} & {\footnotesize{}1} & {\footnotesize{}0} & {\footnotesize{}0} & {\footnotesize{}1}\tabularnewline
{\footnotesize{}$t=4$} & {\footnotesize{}1} & {\footnotesize{}0} & {\footnotesize{}1} & {\footnotesize{}1} & {\footnotesize{}1} & {\footnotesize{}0} & {\footnotesize{}1} & {\footnotesize{}1} & {\footnotesize{}1} & {\footnotesize{}0}\tabularnewline
\hline 
\end{tabular}\caption{Treatment scenarios under different settings.\label{tab:Treatment-scenarios-under}}
\end{table}

The first column represents the traditional DiD literature. The economist
could compute the difference before and after period $t=3$ between
group 1 and group 2.

However, if more units can be included, the DiD becomes a staggered
setup like the one in column 4,5, and 6. The DiD literature (see,
among others, \citealp{callawaysantannamultipletime2021,sunabraham2021})
often times solves the problem of potentially shifting effects in
the treated units in staggered treatment designs by carefully selecting
a subgroup of units for which the econometrician knows that they did
not receive a treatment yet. Such units are supposed to act as a credible
counterfactual in a non-parametric framework. In this case, at time
$t=2$ the units in the group 1 receive a treatment, and the others
can be used as a control. It is possible to identify the effect of
the treatment at time $t=2$ by simply comparing the difference of
the units before and after the treatment. The same quantity can be
computed using the difference of $t=3$ and $t=1$. However, the $t=4$
observations cannot be used for causal inference. 

Clearly, a drawback of this approach is that, for an increasing $t$,
less and less counterfactuals are available. This is because the researcher
is not allowed to include the units which previously received a temporary
treatment in the control group. This is the reason why such methods
often limit themselves to the case of staggered treatment and not
a fully sparse treatment. Sparse treatments could be used in scenarios
in which the treatment becomes obsolete after one period, and allows
to use units from any group as counterfactual for the following period.

\citealp{dube2023local} show that, in the case of sparse treatments,
it is possible to avoid the underlying assumption of the staggered
DiD setup by imposing the existence of a time after which the units
can go back to the control group. This case may be more apt for scenarios
in which the researcher is aware of a short lasting effect of the
intervention. 

PVARs take a different approach. PVARs do not make use of the variables
directly, but of their residuals. Therefore, they can realistically
make the assumption that even treated units can become controls after
just one period. This interpretation is engrained in the idea of looking
at deviations from the autoregressive component of the treatment. Effectively, as long as the carryover effect is known, it is possible to use units which are still receiving some form of treatment by using their residuals. While \citet{dube2023local} would exclude units which are still receiving a form of treatment, PVARs allows them to be on the sample. Clearly, if the distribution of the carryover effects estimated by the autoregressive coefficients $\frac{cov(Y_{it},W_{i,t-j})}{W_{i,t-j}}$, is ill-behaved, inference will be contaminated.
This may be more credible in cases in which we believe that the autoregressive
component is capable of absorbing the effect of the shock, i.e. the
treatment does not alter the structural equilibrium of the outcome
variables. In this case, it can be reasonable to assume the following:
\begin{enumerate}
\item There is a cyclical component which motivates the use of a PVAR,
\item at least one unit is either treated or control at each time $t$ (assumption
\ref{assu:(Policy-is-dummy}),
\item units that have received a treatment would have otherwise seen a similar
outcome to the ones that did not (assumption \ref{assu:(Parallel-trends):}),
\item the AR coefficients are capable of eliminating the effects of the
treatment (assumption \ref{assu:(No-residual-autocorrelation):}).
\end{enumerate}
\begin{thm}
(PVAR: Randomised, dummy, heterogeneous policy): \label{thm:(PVAR:-Randomised,-dummy,}
Under assumption \ref{assu:SUTVA}\ref{assu:(Policy-is-dummy}, \ref{assu:(No-contamination):} Panel Vector Autoregressions
capture the following estimand:

\[
\begin{aligned}\gamma_{jk}= & \mathbb{E}[\widetilde{Y}_{j,it}|t\in T_{P},i\in I_{P}]-\mathbb{E}[\widetilde{Y}_{j,it}|t\in T_{P},i\in I_{C}]\\
 & -\mathbb{E}[\widetilde{Y}_{j,it}|t\in T_{C},i\in I_{P}]+\mathbb{E}[\widetilde{Y}_{j,it}|t\in T_{C},i\in I_{C}].
\end{aligned}
\]
\end{thm}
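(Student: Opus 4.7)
My plan is to reduce $\gamma_{jk}$ to a population linear-projection slope implied by the PVAR's recursive identification, substitute the four-cell structure of Assumption~\ref{assu:(Policy-is-dummy}, and show that the resulting moments collapse to the stated $2\times 2$ DiD. Because the policy block precedes the outcome block in the recursive ordering, $\gamma_{jk}$ equals the coefficient on $\widetilde{W}_{k,it}$ in the best linear prediction of $\widetilde{Y}_{j,it}$ from $(\widetilde{W}_{1,it},\dots,\widetilde{W}_{k,it})$. Assumption~\ref{assu:(No-contamination):} yields $\widetilde{W}_{k,it}\perp\widetilde{W}_{-k,it}$, so the other policies drop out of the projection and I obtain
\[
\gamma_{jk}=\frac{\mathrm{Cov}(\widetilde{Y}_{j,it},\widetilde{W}_{k,it})}{\mathrm{Var}(\widetilde{W}_{k,it})},
\]
with the moments taken over the panel after the PVAR has purged the individual effects $\mu_{i}$ and the lagged dynamics; Assumption~\ref{assu:SUTVA} guarantees that the unit-level potential outcome entering $\widetilde{Y}_{j,it}$ is invariant to the assignment paths of other units, so these moments are well defined.

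The second step is to plug in Assumption~\ref{assu:(Policy-is-dummy}, under which $\widetilde{W}_{k,it}=\mathbf{1}\{i\in I_{P},\,t\in T_{P}\}$, and expand both moments as weighted sums over the four cells $I_{P}\times T_{P}$, $I_{C}\times T_{P}$, $I_{P}\times T_{C}$, $I_{C}\times T_{C}$. The residual $\widetilde{Y}_{j,it}$ is centered at the unit level through $\mu_{i}$, and the iid-in-$i$-given-$t$ structure of the structural innovations delivers $\mathbb{E}[\widetilde{Y}_{j,it}\mid t]=0$ in population, so $\widetilde{Y}_{j,it}$ is effectively two-way demeaned. The textbook algebra of a $2\times 2$ within regression then shows that the ratio $\mathrm{Cov}/\mathrm{Var}$ of the demeaned interaction indicator collapses to
\[
\gamma_{jk}=\mathbb{E}[\widetilde{Y}_{j,it}\mid t\in T_{P},i\in I_{P}]-\mathbb{E}[\widetilde{Y}_{j,it}\mid t\in T_{P},i\in I_{C}]-\mathbb{E}[\widetilde{Y}_{j,it}\mid t\in T_{C},i\in I_{P}]+\mathbb{E}[\widetilde{Y}_{j,it}\mid t\in T_{C},i\in I_{C}],
\]
which is the claim.

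The main obstacle is justifying the implicit two-way demeaning. A naive univariate projection of $\widetilde{Y}_{j,it}$ on the Bernoulli indicator $\mathbf{1}\{i\in I_{P},t\in T_{P}\}$ only returns the difference between the treated cell and the pooled complement, not the four-term DiD. To bridge that gap I would either read~(\ref{eq:main_PVAR}) as incorporating period intercepts once the policy switches on and off within the sample --- which is standard in applied PVAR work --- or lean on the iid structure of the disturbances at the population level to obtain $\mathbb{E}[\widetilde{Y}_{j,it}\mid i]=0$ and $\mathbb{E}[\widetilde{Y}_{j,it}\mid t]=0$. Either route produces the two-way centering that the claim requires, after which the remaining work is the cell-by-cell moment expansion and the statement follows.
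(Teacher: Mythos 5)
Your route is essentially the paper's: the paper proves this theorem by invoking Lemma \ref{lemmaa1}, which delivers exactly the ratio $\mathrm{Cov}(\widetilde{Y}_{j,it},\widetilde{W}_{k,it})/\mathrm{Var}(\widetilde{W}_{k,it})$ you arrive at in your first step, and then (in the proof of the companion ATT theorem) expands that ratio over the four cells of Assumption \ref{assu:(Policy-is-dummy}. So the architecture matches. The substantive point is the "main obstacle" you flag, and you are right to flag it: it is a genuine gap, and it is present in the paper's own argument as well, which passes from the within estimator $\sum_t(\widetilde{W}_{it}-\overline{W}_i)(\widetilde{Y}_{it}-\overline{Y}_i)/\sum_t(\widetilde{W}_{it}-\overline{W}_i)^2$ to the four-term expression by assertion. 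A concrete way to see the problem: under Assumption \ref{assu:(Policy-is-dummy} we have $\widetilde{W}_{k,it}\equiv 0$ for every $i\in I_C$, so after within-unit demeaning the control units contribute nothing to either the numerator or the denominator; the one-way (unit) within projection therefore returns $\mathbb{E}[\widetilde{Y}_{j,it}\mid t\in T_P,i\in I_P]-\mathbb{E}[\widetilde{Y}_{j,it}\mid t\in T_C,i\in I_P]$, a before--after contrast in which $I_C$ never appears, not the stated $2\times 2$ DiD.

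Of the two bridges you propose, only the first actually closes this gap. Reading the specification as including period effects alongside $\mu_i$ (two-way demeaning) does yield the four-term DiD by the standard Frisch--Waugh argument in the balanced $2\times2$ case, but equation (\ref{eq:main_PVAR}) as written contains only unit effects, so this is an added assumption rather than a consequence of the model. Your second bridge does not work: even if the population centering $\mathbb{E}[\widetilde{Y}_{j,it}\mid i]=\mathbb{E}[\widetilde{Y}_{j,it}\mid t]=0$ holds, the simple projection slope on the Bernoulli indicator equals $(\mu_{PP}-\mathbb{E}[\widetilde Y])/(1-\mathbb{P}(W=1))$, which under two-way centering reduces to $\mu_{PP}/(1-\mathbb{P}(W=1))$, whereas the four-term DiD reduces to $\mu_{PP}/\big(\mathbb{P}(t\in T_C)\mathbb{P}(i\in I_C)\big)$; these coincide only for special cell probabilities. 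So your proposal correctly diagnoses the weak point but does not yet resolve it: to make the theorem go through as stated you need to commit to the two-way (unit and time) fixed-effects reading of the projection, and the "textbook algebra" step should then be carried out explicitly rather than cited.
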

Hence, the causal effects can be easily found to be the ones in the
following Theorem:
\begin{thm}
(PVARs identify ATTs): \label{th:(identification-of-dummy-disturbances}
Under assumptions \ref{assu:SUTVA};\ref{assu:(Policy-is-dummy}; \ref{assu:(Parallel-trends):};
\ref{assu:(No-anticipation)} Panel Vector Autoregressions identify,
for all $j$, all $k$, all $t$, and all $i$:
\[
\gamma_{jk}=\widetilde{\text{ATT }}_{j,k,it}=\mathbb{E}[\widetilde{Y}_{j,it}(1)-\widetilde{Y}_{j,it}(0)|\widetilde{W}_{k,it}=1].
\]
\end{thm}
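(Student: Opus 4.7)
The plan is to begin from the estimand representation supplied by Theorem~\ref{thm:(PVAR:-Randomised,-dummy,},
\begin{equation*}
\gamma_{jk}=\mathbb{E}[\widetilde{Y}_{j,it}\mid t\in T_P,i\in I_P]-\mathbb{E}[\widetilde{Y}_{j,it}\mid t\in T_P,i\in I_C]-\mathbb{E}[\widetilde{Y}_{j,it}\mid t\in T_C,i\in I_P]+\mathbb{E}[\widetilde{Y}_{j,it}\mid t\in T_C,i\in I_C],
\end{equation*}
and to rewrite each of the four conditional expectations of observed innovations as a conditional expectation of a potential outcome. By SUTVA (Assumption~\ref{assu:SUTVA}) together with the consistency implicit in the potential-outcomes notation, $\widetilde{Y}_{j,it}=\widetilde{Y}_{j,it}(\widetilde{W}_{k,it})$. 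Assumption~\ref{assu:(Policy-is-dummy} then fixes $\widetilde{W}_{k,it}=1$ on $\{t\in T_P,i\in I_P\}$ and $\widetilde{W}_{k,it}=0$ on the remaining three cells, so the first term becomes $\mathbb{E}[\widetilde{Y}_{j,it}(1)\mid t\in T_P,i\in I_P]$ and each of the other three becomes a conditional expectation of $\widetilde{Y}_{j,it}(0)$ on the corresponding cell.

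Next, I would add and subtract the missing counterfactual $\mathbb{E}[\widetilde{Y}_{j,it}(0)\mid t\in T_P,i\in I_P]$ and regroup the resulting five terms into three blocks:
\begin{align*}
\gamma_{jk}&=\underbrace{\mathbb{E}[\widetilde{Y}_{j,it}(1)-\widetilde{Y}_{j,it}(0)\mid t\in T_P,i\in I_P]}_{(\mathrm{A})}\\
&\quad+\underbrace{\bigl(\mathbb{E}[\widetilde{Y}_{j,it}(0)\mid t\in T_P,i\in I_P]-\mathbb{E}[\widetilde{Y}_{j,it}(0)\mid t\in T_P,i\in I_C]\bigr)}_{(\mathrm{B})}\\
&\quad-\underbrace{\bigl(\mathbb{E}[\widetilde{Y}_{j,it}(0)\mid t\in T_C,i\in I_P]-\mathbb{E}[\widetilde{Y}_{j,it}(0)\mid t\in T_C,i\in I_C]\bigr)}_{(\mathrm{C})}.
\end{align*}
Block (A) is, by Assumption~\ref{assu:(Policy-is-dummy} which identifies $\{\widetilde{W}_{k,it}=1\}$ with $\{t\in T_P,i\in I_P\}$, exactly $\widetilde{\text{ATT}}_{j,k,it}$. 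Block (B) vanishes by the parallel-trends assumption (Assumption~\ref{assu:(Parallel-trends):}), which equates the mean untreated potential outcomes of the two unit groups during treated periods. Block (C) vanishes by the no-anticipation assumption (Assumption~\ref{assu:(No-anticipation)}), which delivers the same equality during control periods. Summing the three blocks yields $\gamma_{jk}=\widetilde{\text{ATT}}_{j,k,it}$, which is the claim.

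The main obstacle is not analytical but expository: one must state the consistency step cleanly, since SUTVA by itself only rules out cross-unit interference, and one must verify that the conditioning event $\{\widetilde{W}_{k,it}=1\}$ used to define the ATT coincides, as a measurable event, with the partition cell $\{t\in T_P,i\in I_P\}$ that organizes the estimand in Theorem~\ref{thm:(PVAR:-Randomised,-dummy,}. Once those two identifications are spelled out, the remainder of the argument is the familiar two-by-two DiD identification, executed on the PVAR innovation space rather than on raw outcomes; the role of Assumption~\ref{assu:(No-residual-autocorrelation):} has already been absorbed upstream in deriving the estimand, so it does not need to be invoked again here.
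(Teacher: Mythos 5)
Your proposal is correct and follows essentially the same route as the paper's own proof: start from the two-by-two double-difference estimand of Theorem~\ref{thm:(PVAR:-Randomised,-dummy,}, rewrite observed innovations as potential outcomes, add and subtract the missing counterfactual $\mathbb{E}[\widetilde{Y}_{j,it}(0)\mid t\in T_P,i\in I_P]$, and cancel the two remaining blocks via Assumptions~\ref{assu:(Parallel-trends):} and~\ref{assu:(No-anticipation)}. Your grouping into blocks (A), (B), (C) is in fact a cleaner write-up of the paper's own add-and-subtract step, which contains a small typographical slip in the intermediate display.
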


Notice that it is often times the case that inference with aggregate
variables is conducted with methods that do not have direct causal
interpretation. Even if in this specific case the assumptions may
appear to be too restrictive, it should be noted that the advantage
of this approach is the ability to make truly causal claims. The applied
literature has been historically welcoming to this assumption by the
means of DiD, since it involves a direct counterfactual and the credibility
of identification can be evaluated on a case by case scenario. This
is not true when dealing with aggregate data, where often times inference
is conducted without appropriately discussing the required assumptions
to make causal claims in potential outcome frameworks. 
\begin{rem}
DiD a-là-\citealp{callawaysantannamultipletime2021} should be preferred
if the researcher is in a staggered treatment setup. LP a-là \citealp{dube2023local}
may be preferred in cases in which the researcher is confident about
a date in which the effect of the policy terminates. PVARs should
be preferred if the researcher is confident about the temporary nature
of the intervention in the autoregressive residuals.
\end{rem}
\begin{rem}
To prove the results I have used a Cholesky decomposition, but no
independence assumption was necessary to obtain the ATT.
\end{rem}

\section{Application using natural disasters\label{sec:Applications}}

Due to the potentially detrimental effects of climate change related
disasters, much of the scientific literature has focussed on the issue
of trying to quantify the effects of disasters on the overall economy,
obtaining mixed results. Often times the effects are found to be neutral
or positive in real activity (see \citet{Strobl2011,linder2013}).
In the nominal side of the economy, it appears as though there is
consensus of a positive effect of disasters on inflation. For example,
\citet{beirne2022} finds a marked positive effect of natural disasters
on inflation in the euro area, \citet{parker2018} finds that natural
disasters positively affect inflation disproportionally in advanced
economies, and \citet{faccia2021feeling} find that extreme temperature
events have a non-negligible impact on prices. Regarding the timing,
there seems to be an agreement about the overall long term neutrality
of natural disasters (\citet{Cavallo2013}) but some empirical evidence
suggests a medium term effect of extreme climate events (\citet{usman2024going}).

The application I propose is the one that is at the most granular
level. I use the series on economic activity produced by \citet{baumeister2024tracking}
that tracks weekly economic activity at the state level in the United
States. The series is produced by considering data from various sources,
including data on mobility, labour market conditions, real activity
indicators, expectations, financial returns and household spending
and salaries. Moreover, I consider data on the natural disasters coming
from the NOAA\footnote{https://www.ncei.noaa.gov/access/billions/events/US/1980-2024?disasters{[}{]}=all-disasters},
which collects state level data on natural disasters exceeding the
billion dollar mark. Such disasters are classified into droughts,
floods, freezes, severe storms, tropical cyclones, wildfires and winter
storms. They tend to vary in length and impact, and a more granular
analysis of their impact is certainly warranted but outside the focus
of this paper. 

The classic assumption when evaluating the effect of disaster shocks
is to assume them to be exogenous with respect to a wide variety of
macroeconomic indicators. However, it may be that economic activity
is correlated with periods of high economic growth. This may overestimate
the disrupting effects of disasters if such growth is positively correlated
with deforestation, environmental degradation, or other geographical
features that may increase the damaging impact of the disaster. Many
other arguments could be put forth in favour of a violation of the
exogeneity condition.

However, if the researcher is willing to believe that territories
affected by natural disasters would have otherwise seen a level of
changes in economic activity which are the same ones as the one of
the territories not affected by disasters, in accordance to the previous
discussion, an ATT, rather than an ATE, may still be estimated. In
this case, I consider the following model
\begin{equation}\label{eq:disaster-pvar}
Y_{it}=(d_{it}^{\prime},\Delta e_{it}^{\prime})^{\prime},
\end{equation}
where $d_{it}$ is an indicator that is equal to one when a state
$i$ faces a natural disaster in week $t$, and $\Delta e_{it}$ is
the first difference weekly variable produced by \citet{baumeister2024tracking}\footnote{I use the first difference because the economic indicator is non stationary.}.
The data analyzed starts in April 1987 and ends in October 2024, as
that is the range of the series of \citet{baumeister2024tracking}.

The underlying assumptions behind this modelling approach are the following:
\begin{enumerate}
\item The potential outcome under no treatment of innovations in economic
activity of the units for which $\widetilde{d}_{it}\sim1$ at time
$t$ can be credibly represented by the units for which $\widetilde{d}_{it}\sim0$,
in accordance to assumption \ref{assu:(Parallel-trends):},
\item The potential outcome under no treatment in the innovations in economic
activity of the units for which $\widetilde{d}_{it}\sim0$ at time
$t$ is common to all the units, in accordance to assumption \ref{assu:(No-anticipation)},
\item The system presents no residual autocorrelation, in accordance to
assumption \ref{assu:(No-residual-autocorrelation):}.
\end{enumerate}
The PVAR is estimated through GMM, with five lags as it strikes the
best balance according to a MBIC test (see Table \ref{tab:Tests-for-lag-US}).
While the other tests may suggest other lags to be optimal, I focus
on the case of five lags as it also strikes the best economic intuition:
using the five previous weeks corresponds to making use of the previous
month data. Moreover, I add a dummy variable to control for the COVID-19
pandemic period. 
\begin{table}
\centering{}%
\begin{tabular}{cccc}
\hline 
 & MBIC & MAIC & MQIC\tabularnewline
\hline 
$p=1$ & 1547.81 & 1755.48 & 1706.31\tabularnewline
$p=2$ & 759.71 & 949.44 & 891.79\tabularnewline
$p=3$ & 228.60 & 380.39 & 334.27\tabularnewline
$p=4$ & 4.24 & 118.08 & 83.49\tabularnewline
$p=5$ & -34.68 & 41.21 & 18.16\tabularnewline
$p=6$ & -26.95 & 11.00 & -0.53\tabularnewline
\hline 
\hline 
 &  &  & \tabularnewline
\end{tabular}\caption{Tests for lag selection.\label{tab:Tests-for-lag-US}}
\end{table}
 
\begin{figure}
\begin{centering}
\includegraphics[scale=0.3]{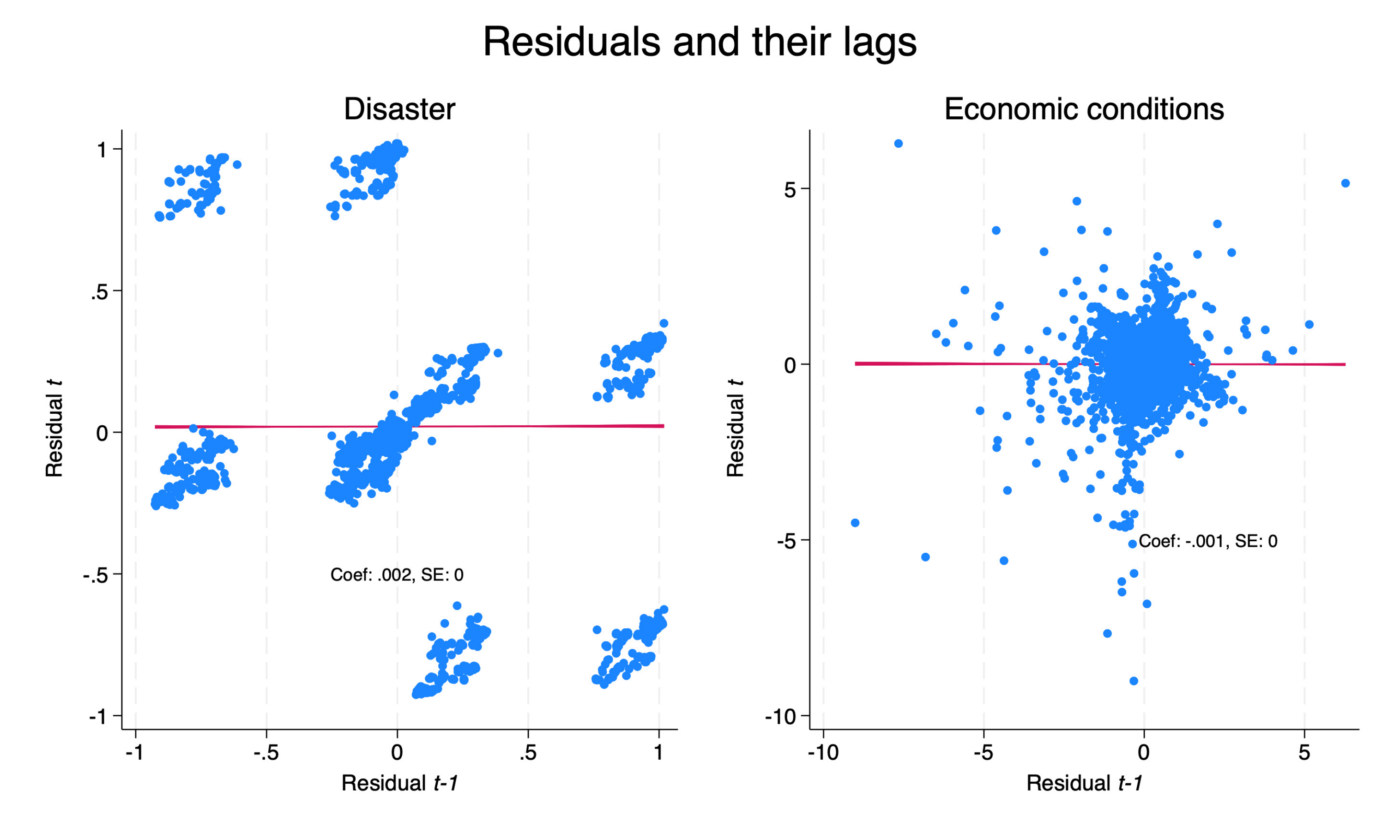}
\par\end{centering}
\centering{}\caption{Residuals of natural disasters and economic conditions plotted against
their lags.\label{fig:Residuals-of-econ-activity,US}}
\end{figure}
To verify that the no residual autocorrelation assumption holds, figure
\ref{fig:Residuals-of-econ-activity,US} displays the residuals of
each variable plotted against their lags. From the figure and the
regressions of the residuals against their lags, no significant correlation
emerges. 

The resulting Impulse Response Functions are plotted in figure \ref{fig:Impulse-Response-Functions-US},
which indicates a negative effect on economic activity that lasts
for two periods. The impulse response function suggests a negative
cumulative impact of $\sim.04$ on average coming from the fact that
the two weeks after the disaster appear to face a negative $\sim0.19$
deviation from their previous value.

\begin{figure}
\centering{}\includegraphics[scale=0.7]{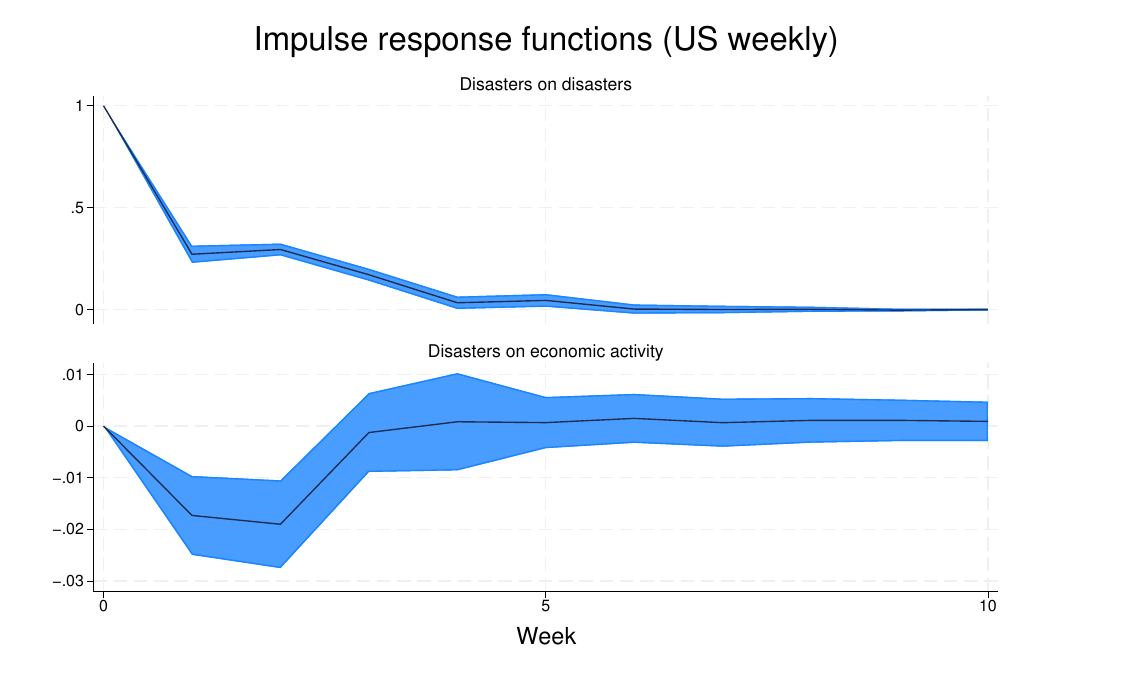}\caption{Impact of natural disasters on future natural disasters and economic
activity. Confidence intervals are bootstrapped with 1000 replications
at the $90\%$ level.\label{fig:Impulse-Response-Functions-US}}
\end{figure}

\section{Interference in panel vector autoregressions\label{sec:interference}}
According to the stable unit treatment value assumption (SUTVA \ref{assu:SUTVA}), potential outcomes only depend on one's own treatment assignment. In many cases, SUTVA may fail due to an unknown interference structure among neighbours. In the fields of environmental economics, urban economics, labour economics, and in the case of natural disasters, place-based policies and interventions often generate spillover effects. In the case of natural disasters, \citet{Cavallo2013} uses a synthetic control approach a-la-\citet{Abadie2010} to generate a counterfactual for countries that faced a significant natural disaster. They discuss possible SUTVA violations, and come to the conclusion that they are unlikely to play a significant role in the case of disasters in the medium term. Yet, \citet{bacchiocchi2024macroeconomic} find statistically significant short-run effects of severe weather shocks on local economic activity and cross-border spillovers operating through economic linkages between US states. Therefore, it seems reasonable to revise this paper's approach to inference and discuss potential violations of the SUTVA assumption and possible solutions.\\

Typically, the approach to dealing with interference through spillover effects is to explicitly assume a network structure (see, among others \citet{aronow2017estimating,manski2013identification,vazquez2023causal}). I follow their lead in explicitly modifying the potential outcome framework to the following
\begin{assumption}[Exposure mapping]\label{assu: exposure-mapping-spillover}
Define an exposure mapping $\mathcal{S}_{it}=s(\widetilde{W}_{-i,1:T})$ that summarizes the influence of other units' assignment paths on unit $i$. The potential outcome can be written as
\[
\widetilde{Y}_{it}\big(\widetilde{w}\;;\;\widetilde{W}_{-i,1:T}\big)
= \widetilde{Y}_{it}\big(\widetilde{W}_{i,1:t-1},\widetilde{w},\widetilde{W}_{i,t+1:T}\;;\;\mathcal{S}_{it}\big),
\]
\end{assumption}
Notice that assumption \ref{assu: exposure-mapping-spillover} allows potential spillovers among individual states due to a mapping $S_{it}$ which depends on the $-i$ units. For example, this may mean that if a natural disaster hits the $-i$ state of California, the effect may reach out to $i$ Arizona. 
In this context, several important causal effects can be discussed:
\begin{enumerate}
\item The total average treatment effect on the treated $\widetilde{\text{ATTE}}_{j,k,it} = \mathbb{E}[\widetilde{Y}_{j,it}(1,\mathcal{S}_{it}=s)-\widetilde{Y}_{j,it}(0,0)|\widetilde{W}_{k,it}=1,S_{j,j,it}=s]$
\item The average spillover effect on the treated  $\widetilde{\text{ASTE}}_{j,k,it} = \mathbb{E}[\widetilde{Y}_{j,it}(0,\mathcal{S}_{it}=s)-\widetilde{Y}_{j,it}(0,0)|\widetilde{W}_{k,it}=1,S_{j,j,it}=s]$
\end{enumerate}
Here,  the total average treatment effect on the treated represents the average effect of receiving a treatment and being subject to a form of spillover effect. The average spillover effect on the treated, instead, captures the effect of the spillover effect under a no treatment scenario.
One useful way to frame the estimation of causal effects is to identify what $\gamma_{j,k}$ could identify under the assumptions in section \ref{subsec:policy-heterog}:
\begin{thm}
(PVARs identify total effects and spillovers on the treated): \label{th:spillover-fail}
Under assumptions \ref{assu:(Policy-is-dummy}; \ref{assu:(Parallel-trends):};
\ref{assu:(No-anticipation)};\ref{assu: exposure-mapping-spillover} Panel Vector Autoregressions identify,
for all $j$, all $k$, all $t$, and all $i$:
\[
\gamma_{jk}=\widetilde{\text{ATTET}}_{j,k,it}-\widetilde{\text{ASTE}}_{j,k,it}
\]
\end{thm}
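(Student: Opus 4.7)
The plan is to build on the identification formula already established in Theorem \ref{thm:(PVAR:-Randomised,-dummy,} and simply track how the potential outcomes are now indexed both by the own treatment and by the exposure mapping $\mathcal{S}_{it}$. By that earlier theorem, $\gamma_{jk}$ equals the double difference
\[
\mathbb{E}[\widetilde{Y}_{j,it}\mid t\in T_{P},i\in I_{P}]-\mathbb{E}[\widetilde{Y}_{j,it}\mid t\in T_{P},i\in I_{C}]-\mathbb{E}[\widetilde{Y}_{j,it}\mid t\in T_{C},i\in I_{P}]+\mathbb{E}[\widetilde{Y}_{j,it}\mid t\in T_{C},i\in I_{C}],
\]
so the task reduces to rewriting each of the four cell means in terms of the potential outcomes defined under Assumption \ref{assu: exposure-mapping-spillover}.

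First I would translate observables into potentials, cell by cell. Under Assumption \ref{assu:(Policy-is-dummy}, units with $i\in I_{P}$, $t\in T_{P}$ are treated while being simultaneously exposed to the other units' assignments, so their observed outcome realises $\widetilde{Y}_{j,it}(1,\mathcal{S}_{it}=s)$; units with $i\in I_{C}$, $t\in T_{P}$ are untreated but still exposed to the spillover generated in $T_{P}$, and therefore realise $\widetilde{Y}_{j,it}(0,\mathcal{S}_{it}=s)$; units observed in $T_{C}$ have neither own treatment nor spillover and realise $\widetilde{Y}_{j,it}(0,0)$. This substitution turns the four cell means into expectations of well-defined potential outcomes conditional on $(I_{P}\text{ vs }I_{C})\times(T_{P}\text{ vs }T_{C})$.

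Next I would apply Assumptions \ref{assu:(Parallel-trends):} and \ref{assu:(No-anticipation)}, read in the natural way under the exposure mapping, that is with $\widetilde{Y}_{j,it}(0)$ reinterpreted as $\widetilde{Y}_{j,it}(0,\mathcal{S}_{it})$ at the spillover level actually realised in the relevant period. No anticipation kills the $T_{C}$ block, because $\mathbb{E}[\widetilde{Y}_{j,it}(0,0)\mid i\in I_{P},t\in T_{C}]=\mathbb{E}[\widetilde{Y}_{j,it}(0,0)\mid i\in I_{C},t\in T_{C}]$. Parallel trends lets me replace $\mathbb{E}[\widetilde{Y}_{j,it}(0,s)\mid i\in I_{C},t\in T_{P}]$ by $\mathbb{E}[\widetilde{Y}_{j,it}(0,s)\mid i\in I_{P},t\in T_{P}]$, so the whole expression collapses to
\[
\gamma_{jk}=\mathbb{E}\big[\widetilde{Y}_{j,it}(1,s)-\widetilde{Y}_{j,it}(0,s)\,\big|\,\widetilde{W}_{k,it}=1,\,\mathcal{S}_{it}=s\big].
\]
Finally, I would add and subtract $\mathbb{E}[\widetilde{Y}_{j,it}(0,0)\mid\widetilde{W}_{k,it}=1,\mathcal{S}_{it}=s]$ inside the expectation, which directly produces the claimed decomposition $\widetilde{\text{ATTE}}_{j,k,it}-\widetilde{\text{ASTE}}_{j,k,it}$ by matching the definitions given just before the theorem.

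The main obstacle is not the algebra, which is essentially bookkeeping, but ensuring that Assumptions \ref{assu:(Parallel-trends):} and \ref{assu:(No-anticipation)} are interpreted coherently once the potential outcomes carry the extra $\mathcal{S}_{it}$ argument: parallel trends must hold in $\widetilde{Y}_{j,it}(0,s)$ at the spillover level $s$ actually realised in $T_{P}$, not only in the pure no-treatment, no-spillover potential outcome. I would flag this as the substantive extension of the earlier assumptions needed for the result, and would argue that it is the minimal strengthening consistent with the DiD-style logic of Theorem \ref{th:(identification-of-dummy-disturbances}; any weaker version would leave a residual spillover-selection term that cannot be removed by the PVAR's within-time contrast between $I_{P}$ and $I_{C}$.
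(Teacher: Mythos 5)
Your proposal is correct and follows essentially the same route as the paper: start from the double difference of Theorem \ref{thm:(PVAR:-Randomised,-dummy,}, rewrite each cell as a potential outcome indexed by $(\widetilde{w},\mathcal{S}_{it})$, cancel the $T_{C}$ block via no anticipation, and add and subtract the counterfactual $\mathbb{E}[\widetilde{Y}_{j,it}(0,0)\mid t\in T_{P},i\in I_{P};\mathcal{S}_{it}=s]$ to read off $\widetilde{\text{ATTE}}-\widetilde{\text{ASTE}}$. Your explicit remark that parallel trends must be reinterpreted at the realised spillover level $\widetilde{Y}_{j,it}(0,s)$ (so that the conditioning can be switched from $I_{C}$ to $I_{P}$ in the second difference) is a step the paper's proof uses implicitly without stating, so your version is if anything slightly more careful.
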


Essentially,  if the SUTVA assumption is violated, the researcher would claim the identification of an ATE, but would be capturing the spillover effect too.
This theorem resembles similar findings in different streams of literature. For example, \citet{vazquez2023identification} finds that experimental designs with spillover effects do not correctly estimate an average treatment effect. \citet{forastiere2021identification} finds that an estimator that wrongly assumes away interference is biased. \citet{butts2021difference,xu2023difference} find that DiD designs do not correctly identify an ATT in the case of spillovers. In essence, the theorem shows that assumptions  \ref{assu:(Policy-is-dummy}; \ref{assu:(Parallel-trends):} are not sufficient to capture a meaningful causal estimand in the context of interference. Rather, $\gamma_{j,k}$ isolates the average total effect on the treated - the impact of the treatment and the spillover - and the average spillover effect on the treated - the impact of spillovers on the counterfactual.

Several papers, however, propose different possible solutions to the issue of spillovers. If one can reasonably restrict the exposure mapping to specific neighbors, it becomes possible to estimate the spillover effect and therefore eliminate the source of bias.

For example, under a correctly specified exposure mapping $ S_{jk,it} $, the set of regressions
\begin{equation}\label{eq: estimation-spillover}
\widetilde{Y}_{j,it} = \delta_{jk} \,\widetilde{W}_{k,it} + \rho_{jk} \, S_{jk,it} + \eta_{jk,it},
\qquad \forall \, j,k,
\end{equation}
can estimate the total average treatment effect (on the treated) and the average spillover effect (on the treated) through $\delta$ and $\rho$, respectively. Typically, $ S_{jk,it} $ is assumed to be a function of neighbour distance. For example \citet{butts2021difference} revises \citet{kline2014local} using concentric county rings to measure the spillover exposure in the context of DiD. Another approach is proposed by \citet{xu2023difference}, which uses a propensity score matching (doubly robust estimator) to measure the spillover exposure.\footnote{\citet{xu2023difference} in particular does not specify an exact exposure map, but relies on the propensity score match. While this approach is not vulnerable to a misspecification of the exposure map, it does require a robust propensity score matching, whose accuracy depends on the information utilised to generate the matching function. It should also be noted that \citet{xu2023difference} operates in a finite population framework, which should provide less conservative inferential claims than superpopulation approaches. }

It is possible to recast the assumptions-estimands as follows. If the empirical application suits it, units which are immediately close to the treated ones may be affected by spillover effects, but units far away may still act as credible counterfactuals. For example, while Arizona may be subject to spillover effects if a drought hits California, it seems unrealistic to think that the effect may spill over to other states which are further away.\footnote{Especially in the context of the empirical application of this paper, usually NOAA well classifies the natural disasters and their economic impact. If a drought happens in California, but also negatively affects Arizona, the NOAA will classify the disaster as happening in Arizona too.} The most commonly utilised approach is to assume a known spillover matrix \footnote{For example, \citet{vazquez2023causal} formally assumes a distribution of compliance types in an instrumental variable framework.}

In this case, the following assumptions need to be satisfied in place of the common parallel trend and no anticipation.
\begin{assumption}\label{assu:modified-parallel-trends}(Modified parallel trends) For each $j\geq1$, $k\geq1$,$t\geq1$:
\[
\mathbb{E}[\widetilde{Y}_{jk}(0,s)|t\in T_{p},i\in I_{c};S_{j,k,it}=s]=E[Y(0,0)|t\in T_{p},i\in I_{p};S_{j,k,it}=0]
\]
\end{assumption}
\begin{assumption}\label{assu:modified-no-ancicip}(No lagged spillovers) For each $j\geq1$, $k\geq1$,$t\geq1$:
\[
\mathbb{E}[\widetilde{Y}_{jk,it}(0,0)|t\in T_{c},i\in I_{p};S_{j,k,it}=0]=\mathbb{E}[\widetilde{Y}_{j,k,it}(0,0)|t\in T_{c},i\in I_{c};S_{j,k,it}=0]
\]
\end{assumption}
Assumption \ref{assu:modified-parallel-trends} implies that the treated units would have seen similar economic developments to "far away" controls. Assumption \ref{assu:modified-no-ancicip} is only mildly stronger than \ref{assu:(No-anticipation)}. It reinforces the non-anticipation assumption to also require there not to be lagged spillover effects.

Under those assumptions, it is possible to show that the estimator $\delta_{j,k}$ in equation \ref{eq: estimation-spillover} recovers the average total effect on the treated.
\begin{thm}
(PVARs identify ATTs-spillovers): \label{th:spillover-identify}
Under assumptions \ref{assu:(Policy-is-dummy}; \ref{assu:modified-parallel-trends};
\ref{assu:modified-no-ancicip};\ref{assu: exposure-mapping-spillover} equation \ref{eq: estimation-spillover} identifies,
for all $j$, all $k$, all $t$, and all $i$:
\[
\delta_{j,k}=\widetilde{\text{ATTET}}_{j,k,it}
\]
\end{thm}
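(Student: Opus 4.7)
The plan is to run the same two-by-two contrast argument used in the proof of Theorem \ref{th:(identification-of-dummy-disturbances}, but now with the regression in equation \ref{eq: estimation-spillover}, which adds the exposure control $S_{jk,it}$ so that $\delta_{jk}$ isolates the variation in $\widetilde W_{k,it}$ holding spillover exposure fixed. The first step is therefore to write the population coefficient $\delta_{jk}$ as a difference of conditional means that, because $S_{jk,it}$ is partialled out, contrasts the treated cell $\{i\in I_P,\,t\in T_P,\,W_{k,it}=1,\,S_{jk,it}=s\}$ against controls with $S_{jk,it}=0$. In line with the 2$\times$2 structure of Theorem \ref{thm:(PVAR:-Randomised,-dummy,}, this delivers
\[
\delta_{jk}\;=\;\mathbb{E}[\widetilde{Y}_{j,it}\mid t\in T_{P},\,i\in I_{P},\,S=s]-\mathbb{E}[\widetilde{Y}_{j,it}\mid t\in T_{P},\,i\in I_{C},\,S=0]
\]
\[
\qquad\qquad -\,\mathbb{E}[\widetilde{Y}_{j,it}\mid t\in T_{C},\,i\in I_{P},\,S=0]+\mathbb{E}[\widetilde{Y}_{j,it}\mid t\in T_{C},\,i\in I_{C},\,S=0].
\]

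The second step is to replace each observed conditional mean by its potential-outcome counterpart through Assumption \ref{assu: exposure-mapping-spillover}. On the treated cell the realised outcome equals $\widetilde{Y}_{j,it}(1,s)$; on every other cell it equals $\widetilde{Y}_{j,it}(0,\cdot)$ at the corresponding exposure value. Adding and subtracting $\mathbb{E}[\widetilde{Y}_{j,it}(0,0)\mid t\in T_{P},\,i\in I_{P},\,S=s]$ splits $\delta_{jk}$ into (i) the desired object $\mathbb{E}[\widetilde{Y}_{j,it}(1,s)-\widetilde{Y}_{j,it}(0,0)\mid t\in T_{P},\,i\in I_{P},\,S=s]$, which is exactly $\widetilde{\text{ATTET}}_{j,k,it}$, and (ii) a bias remainder built from the $(0,0)$ counterfactual means across the four cells.

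The third step is to kill that bias. The modified parallel trends Assumption \ref{assu:modified-parallel-trends} equates $\mathbb{E}[\widetilde{Y}_{j,it}(0,0)\mid t\in T_{P},\,i\in I_{P},\,S=s]$ with $\mathbb{E}[\widetilde{Y}_{j,it}(0,0)\mid t\in T_{P},\,i\in I_{C},\,S=0]$, cancelling the surviving $T_{P}$ terms in the bias. The no-lagged-spillovers Assumption \ref{assu:modified-no-ancicip} does the same job on the $T_{C}$ rows: it makes $\mathbb{E}[\widetilde{Y}_{j,it}(0,0)\mid t\in T_{C},\,i\in I_{P},\,S=0]$ coincide with its $i\in I_{C}$ analogue, so the two $T_{C}$ terms cancel one another. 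What is left is precisely $\widetilde{\text{ATTET}}_{j,k,it}$, giving the claimed identification.

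The main obstacle is the first step, namely justifying that the OLS (or GMM) population coefficient on $\widetilde{W}_{k,it}$ in the partially linear regression \ref{eq: estimation-spillover} really reduces to the four-cell contrast I wrote above. This requires $S_{jk,it}$ to be a correctly specified exposure mapping (so that no residual spillover leaks into $\eta_{jk,it}$) and sufficient joint support of $(\widetilde{W}_{k,it},S_{jk,it})$ across the $I_{P},I_{C},T_{P},T_{C}$ partition to make the conditional means well defined; in particular, one needs untreated units with $S=0$ to exist in both $T_{P}$ and $T_{C}$. Once that support and specification are granted, the DiD-style cancellation in the remaining two steps is mechanical and mirrors the proof of Theorem \ref{th:(identification-of-dummy-disturbances}, with Assumptions \ref{assu:modified-parallel-trends}--\ref{assu:modified-no-ancicip} playing the role of Assumptions \ref{assu:(Parallel-trends):}--\ref{assu:(No-anticipation)}.
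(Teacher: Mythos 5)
Your proposal follows essentially the same route as the paper's own proof: both express $\delta_{jk}$ as the same four-cell contrast with the treated cell at exposure $s$ and the remaining cells at exposure $0$, cancel the $T_{C}$ terms via Assumption \ref{assu:modified-no-ancicip}, add and subtract $\mathbb{E}[\widetilde{Y}_{j,it}(0,0)\mid t\in T_{P},i\in I_{P};S=s]$, and invoke Assumption \ref{assu:modified-parallel-trends} to obtain the ATTET. The one step you rightly flag as the main obstacle --- that the population coefficient on $\widetilde{W}_{k,it}$ in equation \ref{eq: estimation-spillover} reduces to that four-cell contrast --- is also simply asserted without derivation in the paper, so your treatment is, if anything, slightly more candid about where the argument is thin.
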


Once the $ATTE_{j,k,it}$ has been estimated, it can be plugged in as a value for the impulse response function \footnote{Notice that this approach, differently from the Cholesky decomposition, does not fully identify the rotations of the covariance matrix, $\hat{O}\hat{O}^(-1)=\hat{\Sigma}$, but it does provide enough information to generate the impulse response functions of interest.  In the equation for $\hat{O}$ in lemma \ref{lemmaa1} this means that, for a unitary shock in natural disasters $\hat{o}_{it}^{11}=1$, $\hat{o}_{it}^{12}=\delta$, but $\hat{o}^{22}_{it}$ is not identified. This feature is common in externally identified SVARS, such as \citet{gertler2015monetary}.}

Under this framework, it is possible to recover the errors from the PVAR estimation in \ref{eq:disaster-pvar} and estimate a linear regression against the natural disasters innovation and a spillover matrix. In this context, the spillover matrix is zero every period in which there are no disasters, and one in times in which there are disasters for the states that border the treated.\footnote{The matrix is generated based on USSWM from \citet{merryman2005}.} In this case $S_{it}$ becomes the inverse of how many states, on average, are treated in the neighbour. 

The results of the regression are displayed in table \ref{tab:spillover-regression-results}. They show an ATTE broadly similar to the ATT estimated by the Cholesky decomposition ($-.0109$), which does not take into account potential SUTVA violations.
\begin{table}[htbp]
\centering
\begin{tabular}{l c}
\toprule
 & $\widetilde{\text{Economic Activity}}$ \\
\midrule
$\widetilde{\text{Disasters}}$  & $-0.0140^{***}$ \\
                                & $(0.00462)$     \\
$\text{Spillovers}$             & $0.00548^{***}$ \\
                                & $(0.00180)$     \\
\bottomrule
\end{tabular}
\caption{Results of the linear regression of economic activity innovations
on natural disaster innovations and the spillover metric. 
Bootstrapped standard errors are reported in parentheses. 
Stars denote statistical significance at the 10\% (*), 5\% (**), and 1\% (***) levels.}\label{tab:spillover-regression-results}
\end{table}
The estimated impulse response is very similar to the original work because its evolution strictly depends on the AR coefficients, which are the same as the specification in section \ref{sec:Applications}, and is therefore not presented in this text to avoid redundancies.

\section{Conclusion \label{sec:Conclusion}}

Panel Vector Autoregressions are a flexible tool for the estimation
of causal effects. Depending on the shape of the treatment and the
assumptions the researcher is willing to put on their distributions,
the causal effects can either be:
\begin{enumerate}
\item Average Treatment Effects (ATE) $\mathbb{E}[\widetilde{Y}_{j,it}(1)-\widetilde{Y}_{j,it}(0)]$
if the policy variable is an ``homogeneous'' dummy, i.e. all the
units are either jointly treated or jointly non treated at each time,
and is independent on its past, its future, other policy variables,
and the outcome variable.
\item Average Causal Response on the Treated (ACRT) $\frac{\delta\mathbb{E}[\widetilde{Y}_{j,it}(\lambda_{k})|\widetilde{W}_{k,it}=\lambda_{k}]}{\delta\lambda_{k}}$
if the policy variable is heterogeneous among units, continuous, and
normally distributed.
\item A weighted average of Average Causal Response on the Treated (ACRT)
and Average Treatment Effects (ATE) $\int_{d_{L}}^{d_{U}}\frac{\delta\mathbb{E}[\widetilde{Y}_{j,it}(\lambda_{k})|\widetilde{W}_{kit}=\widetilde{w}_{k}^{\prime}]}{\delta\lambda_{k}}q_{1}(\lambda_{k})+q_{0}\frac{\mathbb{E}[\widetilde{Y}_{j,it}(d_{L})-\widetilde{Y}_{j,it}(0)]}{d_{L}}$
if the policy variable is heterogeneous among units in a given time,
non-negative, continuous, and strong parallel trends hold.
\item Average Treatment Effects on the Treated (ATT) $\mathbb{E}[\widetilde{Y}_{j,it}|\widetilde{W}_{k,it}=1]-\mathbb{E}[\widetilde{Y}_{j,it}|\widetilde{W}_{k,it}=0]$
if the policy variable is a dummy but some units are treated while
others are not, the units respect the ``parallel trend'' and ``no
anticipation'' conditions, and there is no residual autocorrelation.
\end{enumerate}
I showcase the high potential of this last set of assumptions
to evaluate the effects of natural disasters on the real economy. To best complement the work, I provide alternative formulations to deal with spillovers.

\newpage
\bibliographystyle{apalike}
\bibliography{bibliography}


\newpage
\appendix

\section*{Mathematical results\label{sec:AppendixACPVAR}}

\subsection{Proofs}
\begin{lem}
\label{lemmaa1}Consider (without loss of generality)\footnote{The element in the first column, $n-th$ row, is always $\frac{cov(\widetilde{W}_{it},\widetilde{Y}_{it}^{n-1})}{\sqrt{\widetilde{W}_{it}}}$
if there is only one policy variable, which allows to make the claim
that the result is valid for any outcome variable.} the case of a system of the kind $x_{it}=(W_{it}^{\prime},Y_{it}^{\prime})^{\prime}$,
which allows one outcome and one policy. The Cholesky decomposition
of the covariance matrix of the disturbances results in 
\[
\hat{O}=\left[\begin{array}{cc}
\sqrt{var(\widetilde{W}_{it})} & 0\\
\frac{cov(\widetilde{W}_{it},\widetilde{Y}_{it})}{\sqrt{var(\widetilde{W}_{it})}} & \sqrt{var(\widetilde{Y}_{it})-(\frac{cov(\widetilde{W}_{it},\widetilde{Y}_{it})}{\sqrt{var(\widetilde{W}_{it})}})^{2}}
\end{array}\right]=\left[\begin{array}{cc}
\hat{o}_{it}^{11} & 0\\
\hat{o}_{it}^{12} & \hat{o}_{it}^{22}
\end{array}\right].
\]
Notice that normally the researcher is interested in a unitary response
to a shock in the first variable. This is done by considering the
impulse $\hat{O}*(1/\sqrt{var(\widetilde{W}_{it})},0)^{\prime}$.
Hence, the estimator of the effect of a shock in the first variable
to the second becomes $\hat{\gamma}=\hat{o}_{it}^{12}/\hat{o}_{it}^{11}$,
which is
\[
\frac{cov(\widetilde{W}_{it},\widetilde{Y}_{it})}{var(\widetilde{W}_{it})}
\]
Hence, there is a $\hat{\gamma}_{jk}$ for each $j=1,..,J$.

\end{lem}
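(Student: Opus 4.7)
The plan is to obtain the lemma by direct computation: writing the $2\times 2$ covariance matrix of the reduced-form disturbances explicitly, solving for the lower-triangular Cholesky factor entry by entry, and then applying the normalizing impulse vector $(1/\sqrt{\mathrm{var}(\widetilde{W}_{it})},0)^{\prime}$ to read off the coefficient on $\widetilde{Y}_{it}$.

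First, I would set $\Sigma=\mathbb{E}[\tilde{x}_{it}\tilde{x}_{it}^{\prime}]$ with $\tilde{x}_{it}=(\widetilde{W}_{it},\widetilde{Y}_{it})^{\prime}$ and write it in the form
\[
\Sigma=\begin{pmatrix}\mathrm{var}(\widetilde{W}_{it}) & \mathrm{cov}(\widetilde{W}_{it},\widetilde{Y}_{it})\\ \mathrm{cov}(\widetilde{W}_{it},\widetilde{Y}_{it}) & \mathrm{var}(\widetilde{Y}_{it})\end{pmatrix}.
\]
Positing the lower-triangular factor $\hat{O}=\bigl(\begin{smallmatrix}a & 0\\ b & c\end{smallmatrix}\bigr)$ with positive diagonal entries, the identity $\Sigma=\hat{O}\hat{O}^{\prime}$ forces $a^{2}=\mathrm{var}(\widetilde{W}_{it})$, $ab=\mathrm{cov}(\widetilde{W}_{it},\widetilde{Y}_{it})$, and $b^{2}+c^{2}=\mathrm{var}(\widetilde{Y}_{it})$. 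Solving sequentially yields $a=\sqrt{\mathrm{var}(\widetilde{W}_{it})}$, $b=\mathrm{cov}(\widetilde{W}_{it},\widetilde{Y}_{it})/\sqrt{\mathrm{var}(\widetilde{W}_{it})}$, and $c=\sqrt{\mathrm{var}(\widetilde{Y}_{it})-b^{2}}$, reproducing the stated form of $\hat{O}$. Uniqueness follows from the standard fact that the Cholesky factor of a positive-definite matrix with fixed sign convention is unique.

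The second step is to normalize the shock so that the first variable moves by one unit on impact. Multiplying the scaled impulse vector gives
\[
\hat{O}\begin{pmatrix}1/\sqrt{\mathrm{var}(\widetilde{W}_{it})}\\ 0\end{pmatrix}=\begin{pmatrix}1\\ b/\sqrt{\mathrm{var}(\widetilde{W}_{it})}\end{pmatrix},
\]
so that the instantaneous response of $\widetilde{Y}_{it}$ to a unit shock in $\widetilde{W}_{it}$ equals $\hat{o}_{it}^{12}/\hat{o}_{it}^{11}$. Substituting the expression for $b$ obtained above collapses this ratio to $\mathrm{cov}(\widetilde{W}_{it},\widetilde{Y}_{it})/\mathrm{var}(\widetilde{W}_{it})$, which is the projection coefficient from regressing $\widetilde{Y}_{it}$ on $\widetilde{W}_{it}$, exactly as claimed.

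There is no substantial obstacle here: the result is a direct algebraic computation. The only thing worth flagging, as hinted in the footnote, is that when the system contains several outcomes $\widetilde{Y}_{it}^{1},\dots,\widetilde{Y}_{it}^{J}$, recursive identification places $\widetilde{W}_{it}$ first and the same recursion on the first column of $\hat{O}$ gives $\hat{o}_{it}^{n1}=\mathrm{cov}(\widetilde{W}_{it},\widetilde{Y}_{it}^{n-1})/\sqrt{\mathrm{var}(\widetilde{W}_{it})}$, so the argument above extends outcome by outcome and produces one coefficient $\hat{\gamma}_{jk}$ per $j=1,\dots,J$. This is what validates using the bivariate case ``without loss of generality'' in subsequent theorems.
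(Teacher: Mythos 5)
Your computation is correct and follows exactly the route the paper takes: the paper states this lemma without a separate proof, treating the entry-by-entry solution of $\Sigma=\hat{O}\hat{O}^{\prime}$ and the subsequent normalization as the content of the lemma itself, which is precisely what you write out. Your extension to the multi-outcome case via the first column of the Cholesky factor also matches (and quietly corrects a typo in) the paper's footnote, where the denominator should read $\sqrt{var(\widetilde{W}_{it})}$ rather than $\sqrt{\widetilde{W}_{it}}$.
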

\begin{proof}
\textbf{Proof of Theorem \ref{thm:(PVARs-Impulse-Response}.} 
Consider
the first part of the Impulse Response Function, i,e, $\mathbb{E}[\widetilde{Y}_{j,it}1\{\widetilde{W}_{k,it}=1\}]$.

\[
\begin{aligned}\mathbb{E}[\widetilde{Y}_{j,it}1\{\widetilde{W}_{k,it}=1\}]\\
=\mathbb{E}[\widetilde{Y}_{j,it}(\widetilde{W}_{i,1:t-1},1,\widetilde{W}_{-k,it},\widetilde{W}_{i,t+1:t+h})1\{\widetilde{W}_{k,it}=1\}]\\
=\mathbb{E}[\widetilde{Y}_{j,it}(\widetilde{W}_{i,1:t-1},1,\widetilde{W}_{-k,it},\widetilde{W}_{i,t+1:t+h})]\mathbb{E}[1\{\widetilde{W}_{k,it}=1\}]\\
+Cov(\widetilde{Y}_{j,it}(\widetilde{W}_{i,1:t-1},1,\widetilde{W}_{-k,i,t},\widetilde{W}_{i,t+1:t+h}),1\{\widetilde{W}_{k,it}=1\})
\end{aligned}
\]

Where the first equality comes from writing the first part of the
IRF, the second comes from expressing the realized outcomes in terms
of the potential outcomes, the third comes from the rule of the expectations
$\mathbb{E}[AB]=\mathbb{E}[A]\mathbb{E}[B]+cov(A,B)$.

Hence:
\[
\begin{aligned}\mathbb{E}[\widetilde{Y}_{j,it}|\widetilde{W}_{k,it}=1]=\mathbb{E}[\widetilde{Y}_{j,it}(\widetilde{W}_{i,1:t-1},1,\widetilde{W}_{-k,it},\widetilde{W}_{i,t+1:t+h})]\\
+\frac{Cov(\widetilde{Y}_{j,it}(\widetilde{W}_{i,1:t-1},1,\widetilde{W}_{-k,it},\widetilde{W}_{i,t+1:t+h}),1\{\widetilde{W}_{k,it}=1\})}{\mathbb{E}[1\{\widetilde{W}_{k,it}=1\}]}
\end{aligned}
\]

Finally, considering the other side of the Impulse Response Function,
I obtain the desired result.\footnote{This claim is similar to the one of \citet{RambachanSheppard2021},
with the difference that here I am specifically discussing the result
of a Cholesky decomposition, which moves the interpretation to the
residuals.}
\end{proof}

\begin{proof}
\textbf{Proof of Theorem \ref{thm:(PVARs-estimate-ATEs).}. }The proof
is in the text.
\end{proof}
\begin{proof}
\textbf{Proof of Theorem \ref{thm:(PVARS:-Randomised,-continuous,}.
}Rewriting Lemma \ref{lemmaa1}, the proposed estimator estimates,
for every dependent variable $Y_{j,i}$ and every policy $W_{k,it}$
\[
\gamma_{jk}=\int q(\lambda_{k})g'(\lambda_{k})d\lambda_{k},
\]

Consider that the estimator is $Cov(\widetilde{Y}_{j},\widetilde{W}_{k})/var(\widetilde{W}_{k})$. 

The numerator will be:

\[
\begin{aligned}Cov(\widetilde{Y}_{j},\widetilde{W}_{k}) & =\mathbb{E}[(\widetilde{Y}_{j}-\mathbb{E}(\widetilde{Y}_{j}))(\widetilde{W}_{k}-\mathbb{E}(\widetilde{W}_{k}))]\\
 & =\mathbb{E}[\widetilde{Y}_{j}(\widetilde{W}_{k}-\mathbb{E}(\widetilde{W}_{k}))]\\
 & =\mathbb{E}[(\mathbb{E}[\widetilde{Y}_{j}|\widetilde{W}_{k}])(\widetilde{W}_{k}-\mathbb{E}(\widetilde{W}_{k}))]\\
 & =\int(\lambda_{k}-\mathbb{E}(\widetilde{W}_{k}))g(\lambda_{k})f_{\widetilde{W}_{k}}(\lambda_{k})d\lambda_{k},
\end{aligned}
\]

where the first equality holds because of the law of the covariance,
the second holds because the innovations of a VAR are assumed to be
zero mean for each $j$ and each $t$, the third equality holds by
the law of total expectations, and the last holds by rewriting the
expected value as an integral and defining $g(\lambda_{k})=\mathbb{E}[\widetilde{Y}_{j}|\widetilde{W}_{k}=\lambda_{k}]$.
Defining $v^{\prime}(m)=(\lambda_{k}-\mathbb{E}[\widetilde{W}_{k}])f_{\widetilde{W}}(m)$,
$v(m)=\int_{-\infty}^{\widetilde{W}_{k}}(m-\mathbb{E}[\widetilde{W}_{k}])f_{\widetilde{W}_{k}}(m)dm$
and $u(\lambda_{k})=g(\lambda_{k})$ I can apply integration by parts
to obtain 
\begin{equation}
\begin{aligned}Cov(\widetilde{Y}_{j},\widetilde{W}_{k}) & =\end{aligned}
\int_{-\infty}^{\widetilde{W}_{k}}(m-\mathbb{E}[\widetilde{W}_{k}])f_{\widetilde{W}_{k}}(m)dm\,g(\lambda_{k})-\int_{-\infty}^{\infty}(\int_{-\infty}^{\widetilde{W}_{k}}(m-\mathbb{E}[\widetilde{W}_{k}])f_{\widetilde{W}_{k}}(m)dm\,g^{\prime}(\lambda_{k})d\lambda_{k}.\label{eq:equazione_prova_random}
\end{equation}

Notice that the first part converges to zero if the variance of $\widetilde{W}_{k}$
exists, and changing the sign to the second part we obtain
\[
\begin{aligned}Cov(\widetilde{Y}_{j},\widetilde{W}_{k}) & =\int_{-\infty}^{\infty}(\int_{-\infty}^{\widetilde{W}_{k}}(\mathbb{E}[\widetilde{W}_{k}]-m)f_{\widetilde{W}_{k}}(m)dm\,g^{\prime}(\lambda_{k})d\lambda_{k}\\
 & =\int_{-\infty}^{\infty}(\mathbb{E}[\widetilde{W}_{k}]F_{\widetilde{W}_{k}}(\lambda_{k})-\theta_{\widetilde{W}_{k}}(\lambda_{k}))g^{\prime}(\lambda_{k})d\lambda_{k},
\end{aligned}
\]

where the first equality holds by changing the sign of the second
part of \ref{eq:equazione_prova_random}, the second holds by substituting
the definition of $\theta_{\widetilde{W}_{k}}(\lambda_{k})=\int_{-\infty}^{\widetilde{W}_{k}}mf_{\widetilde{W}_{k}}(m)dm$.

And the denominator is $var(\widetilde{W}_{k})=\sigma_{\widetilde{W}_{k}}^{2}$.
Therefore,
\[
\gamma_{jk}=\frac{\int_{-\infty}^{\infty}(\mathbb{E}[\widetilde{W}_{k}]F_{\widetilde{W}_{k}}(\lambda_{k})-\theta_{\widetilde{W}_{k}}(\lambda_{k}))g^{\prime}(\lambda_{k})d\lambda_{k}}{\sigma_{\widetilde{W}_{k}}^{2}},
\]

which is equivalent to the one in the theorem by using the definition
of the weights $q(\lambda_{k})=\frac{1}{\sigma_{\widetilde{W}_{k}}^{2}}\int_{-\infty}^{\infty}(\mathbb{E}[\widetilde{W}_{k}]F_{\widetilde{W}_{k}}(\lambda_{k})-\theta_{\widetilde{W}_{k}}(\lambda_{k}))$.
\end{proof}
\begin{proof}
\textbf{Proof of Theorem \ref{thm:(PVARs-estimate-the-ACRT)} }Consider
the form 
\[
\gamma_{jk}=\int q(\lambda_{k})g'(\lambda_{k})d\lambda_{k}.
\]

Substituting $q(\lambda_{k})=\frac{1}{\sqrt{2\pi}}\int_{-\infty}^{\lambda_{k}}me^{-m^{2}/2}dm=\frac{1}{\sqrt{2\pi}}e^{-m^{2}/2}$
inside the definition of $\gamma_{k}$, I obtain the following 
\[
\begin{aligned}\gamma_{jk} & =\int\frac{1}{\sqrt{2\pi}}e^{-m^{2}/2}g^{\prime}(\lambda_{k})d\lambda_{k}\\
 & =g^{\prime}(\lambda_{k})\int\frac{1}{\sqrt{2\pi}}e^{-m^{2}/2}d\lambda_{k}\\
 & =g^{\prime}(\lambda_{k}),
\end{aligned}
\]

where the first equality is true by substituting the value of the
weights $q(\lambda_{k})$, the second by the fact that the density
of $g^{\prime}(\lambda_{k})$ does not depend on $\lambda_{k}$, and
the last by the laws of integration of normal variables which are
satisfied by $\widetilde{W}_{k}$ according to assumption \ref{assu:(Normal-distribution)}.
\end{proof}
\begin{proof}
\textbf{Proof of Theorem \ref{thm:(PVARs-identify-ACR):}. }The ACRT
can be expressed as:
\[
\frac{\delta\mathbb{E}[\widetilde{Y}_{j}(\lambda_{k})|\widetilde{W}_{k}=\lambda_{k}]}{\delta\lambda_{k}}=\frac{\delta\mathbb{E}[\widetilde{Y}_{j}(\lambda_{k})]}{\delta\lambda_{k}}+\frac{cov(\widetilde{Y}_{j}(\lambda_{k}),1\{\widetilde{W}_{k}=\lambda_{k}\})}{var(1\{\widetilde{W}_{k}=\lambda_{k}\})}.
\]

By the same argument as Theorem \ref{thm:(PVARs-estimate-ATEs).}
$cov(\widetilde{Y}_{j}(\lambda_{k}),1\{\widetilde{W}_{k}=\lambda_{k}\})=0$,
which concludes the proof.
\end{proof}
\begin{proof}
\textbf{Proof of Theorem \ref{thm:(PVARs-estimate-weighted-average)}
}The proposed estimator is:
\[
\gamma_{jk}=cov(\widetilde{Y}_{j,it},\widetilde{W}_{k,it})/var(\widetilde{W}_{k,it})
\]

To shorten the definitions I will use $\ensuremath{m(d)=\mathbb{E}[\widetilde{Y}_{j,it}|\widetilde{W}_{k,it}=d]}$.

\[
\begin{aligned}\gamma_{jk}= & \frac{\mathbb{E}[\widetilde{Y}_{j,it}(\widetilde{W}_{k,it}-\mathbb{E}[\widetilde{W}_{k,it}])]}{var(\widetilde{W}_{k,it})}\\
 & =\mathbb{E}[\frac{(\widetilde{W}_{k,it}-\mathbb{E}[\widetilde{W}_{k,it}])}{var(\widetilde{W}_{k,it})}(m(\widetilde{W}_{k,it})-m(0))]\\
 & =\mathbb{E}\left[\frac{(\widetilde{W}_{k,it}-\mathbb{E}[\widetilde{W}_{k,it}])}{var(\widetilde{W}_{k,it})}(m(\widetilde{W}_{k,it})-m(0))|\widetilde{W}_{k,it}>0\right]\mathbb{P}(\widetilde{W}_{k,it}>0)\\
 & =\mathbb{E}\left[\frac{(\widetilde{W}_{k,it}-\mathbb{E}[\widetilde{W}_{k,it}])}{var(\widetilde{W}_{k,it})}(m(\widetilde{W}_{k,it})-m(d_{L}))|\widetilde{W}_{k,it}>0\right]\mathbb{P}(\widetilde{W}_{k,it}>0)+\\
 & +\mathbb{E}\left[\frac{(\widetilde{W}_{k,it}-\mathbb{E}[\widetilde{W}_{k,it}])}{var(\widetilde{W}_{k,it})}(m(d_{L})-m(0))|\widetilde{W}_{k,it}>0\right]\mathbb{P}(\widetilde{W}_{k,it}>0)=\\
 & =A_{1}+A_{2}.
\end{aligned}
\]

Here the first equality holds because the proposed estimator is akin
to a simple linear regression of $\widetilde{Y}_{j,it}$ on $\widetilde{W}_{k,it}$,
the second equality holds because $\mathbb{E}[(\widetilde{W}_{k,it}-\mathbb{E}[\widetilde{W}_{k,it}])m(0)]=0$,
the third equality holds because $\mathbb{E}[m(\widetilde{W}_{k,it})-m(0)|\widetilde{W}_{k,it}=0]=0$,
and the fourth equality holds by adding and subtracting $m(d_{L})$.
Then, for $A_{1}$:
\[
\begin{aligned}A_{1}= & \mathbb{E}\left[\frac{(\widetilde{W}_{k,it}-\mathbb{E}[\widetilde{W}_{k,it}])}{var(\widetilde{W}_{k,it})}(m(\widetilde{W}_{k,it})-m(d_{L}))|\widetilde{W}_{k,it}>0\right]\mathbb{P}(\widetilde{W}_{k,it}>0)\\
 & =\frac{\mathbb{P}(\widetilde{W}_{k,it}>0)}{var(\widetilde{W}_{k,it})}\int_{d_{L}}^{d_{U}}(c-\mathbb{E}[\widetilde{W}_{k,it}])(m(c)-m(d_{L}))dF_{\widetilde{W}_{k,it}|\widetilde{W}_{k,it}>0}(c)\\
 & =\frac{\mathbb{P}(\widetilde{W}_{k,it}>0)}{var(\widetilde{W}_{k,it})}\int_{d_{L}}^{d_{U}}(c-\mathbb{E}[\widetilde{W}_{k,it}])\int_{d_{L}}^{k}m'(\lambda)d\lambda dF_{\widetilde{W}_{k,it}|\widetilde{W}_{k,it}>0}(c)\\
 & =\frac{\mathbb{P}(\widetilde{W}_{k,it}>0)}{var(\widetilde{W}_{k,it})}\int_{d_{L}}^{d_{U}}(c-\mathbb{E}[\widetilde{W}_{k,it}])\int_{d_{L}}^{d_{U}}\boldsymbol{1}\{\lambda<c\}m'(\lambda)d\lambda dF_{\widetilde{W}_{k,it}|\widetilde{W}_{k,it}>0}(c)\\
 & =\frac{\mathbb{P}(\widetilde{W}_{k,it}>0)}{var(\widetilde{W}_{k,it})}\int_{d_{L}}^{d_{U}}m'(\lambda)\int_{d_{L}}^{d_{U}}(k-\mathbb{E}[\widetilde{W}_{k,it}])\boldsymbol{1}\{\lambda<c\}dF_{\widetilde{W}_{k,it}|\widetilde{W}_{k,it}>0}(c)d\lambda\\
 & =\frac{\mathbb{P}(\widetilde{W}_{k,it}>0)}{var(\widetilde{W}_{k,it})}\int_{d_{L}}^{d_{U}}m'(\lambda)\mathbb{E}[(\widetilde{W}_{k,it}-\mathbb{E}[\widetilde{W}_{k,it}])\boldsymbol{1}\{\lambda\leq\widetilde{W}_{k,it}\}|\widetilde{W}_{k,it}>0]d\lambda\\
 & =\frac{\mathbb{P}(\widetilde{W}_{k,it}>0)}{var(\widetilde{W}_{k,it})}\int_{d_{L}}^{d_{U}}m'(\lambda)\mathbb{E}[(\widetilde{W}_{k,it}-\mathbb{E}[\widetilde{W}_{k,it}])|\widetilde{W}_{k,it}\geq\lambda]\mathbb{P}[\widetilde{W}_{k,it}\geq\lambda|\widetilde{W}_{k,it}>0]d\lambda\\
 & =\int_{d_{L}}^{d_{U}}m'(\lambda)\frac{(\mathbb{E}[\widetilde{W}_{k,it}|\widetilde{W}_{k,it}\geq\lambda]-\mathbb{E}[\widetilde{W}_{k,it}])\mathbb{P}[\widetilde{W}_{k,it}\geq\lambda]}{var(\widetilde{W}_{k,it})}d\lambda,
\end{aligned}
\]

where the first equality rewrites the previous result, the second
holds by rearranging and writing the expectation as an integral, the
third makes use of the fundamental theorem of calculus, the fourth
rewrites the inner integral so that it is over $d_{U}$ to $d_{L}$,
the fifth changes the order of integration and rearranging terms,
the sixth by rewriting the inner integral as an expectation, the seventh
by the law of iterated expectations (and since $\widetilde{W}_{k,it}\geq\lambda$
implies that $\widetilde{W}_{k,it}\geq0$), and the last equality
holds by combining terms. For $A_{2}$: 
\[
\begin{aligned}A_{2} & =\mathbb{E}\left[\frac{(\widetilde{W}_{k,it}-\mathbb{E}[\widetilde{W}_{k,it}])}{var(\widetilde{W}_{k,it})}(m(d_{L})-m(0))|\widetilde{W}_{k,it}>0\right]\mathbb{P}[\widetilde{W}_{k,it}>0]\\
 & =\frac{(\mathbb{E}[\widetilde{W}_{k,it}|\widetilde{W}_{k,it}>0]-\mathbb{E}[\widetilde{W}_{k,it}])\mathbb{P}(\widetilde{W}_{k,it}>0)d_{L}}{var(\widetilde{W}_{k,it})}\frac{(m(d_{L})-m(0)}{d_{L}},
\end{aligned}
\]

where the first equality is the definition of $A_{2}$, and the second
holds by multiplying and dividing by $d_{L}$. Then we obtain:
\[
\begin{aligned}\gamma_{jk} & =\int_{d_{L}}^{d_{U}}m'(\lambda)\frac{(\mathbb{E}[\widetilde{W}_{k,it}|\widetilde{W}_{k,it}\geq\lambda]-\mathbb{E}[\widetilde{W}_{k,it}])\mathbb{P}[\widetilde{W}_{k,it}\geq\lambda]}{var(\widetilde{W}_{k,it})}d\lambda+\\
 & +\frac{(\mathbb{E}[\widetilde{W}_{k,it}|\widetilde{W}_{k,it}>0]-\mathbb{E}[\widetilde{W}_{k,it}])\mathbb{P}(\widetilde{W}_{k,it}>0)d_{L}}{var(\widetilde{W}_{k,it})}\frac{(m(d_{L})-m(0))}{d_{L}},
\end{aligned}
\]

which can be rewritten as: 
\[
\gamma_{jk}=\int_{d_{L}}^{d_{U}}q_{1}(\lambda)\frac{\delta\mathbb{E}[\widetilde{Y}_{j,it}|\widetilde{W}_{k,it}=\lambda]}{d\lambda}d\lambda+q_{0}\frac{\mathbb{E}[\widetilde{Y}_{j,it}|\widetilde{W}_{k,it}=d_{L}]-\mathbb{E}[\widetilde{Y}_{j,it}|\widetilde{W}_{k,it}=0]}{d_{L}},
\]
 where: 
\[
\begin{aligned}q_{1}(\lambda):=\frac{\mathbb{E}[\widetilde{W}_{k,it}|\widetilde{W}_{k,it}\geq\lambda]-\mathbb{E}[\widetilde{W}_{k,it}])\mathbb{P}(\widetilde{W}_{k,it}\geq\lambda)}{var(\widetilde{W}_{k,it})}, & \text{ and } & q_{0}:=\frac{(\mathbb{E}[\widetilde{W}_{k,it}|\widetilde{W}_{k,it}>0]-\mathbb{E}[\widetilde{W}_{k,it}])\mathbb{P}(\widetilde{W}_{k,it}>0)d_{L}}{var(\widetilde{W}_{k,it})}\end{aligned}
.
\]
 Here the weights satisfy: 
\[
\begin{aligned}\int_{d_{L}}^{d_{U}}q_{1}(\lambda)d\lambda+w_{0}= & \frac{1}{var(\widetilde{W}_{k,it})}\{\int_{d_{L}}^{d_{U}}\mathbb{E}[\widetilde{W}_{k,it}|\widetilde{W}_{k,it}\geq\lambda]\mathbb{P}[\widetilde{W}_{k,it}\geq\lambda)d\lambda\\
 & -\mathbb{E}[\widetilde{W}_{k,it}]\int_{d_{L}}^{d_{L}}\mathbb{P}(\widetilde{W}_{k,it}\geq\lambda)d\lambda\\
 & +\mathbb{E}[\widetilde{W}_{k,it}|\widetilde{W}_{k,it}>0]\mathbb{P}(\widetilde{W}_{k,it}>0)d_{L}\\
 & -\mathbb{E}[\widetilde{W}_{k,it}]\mathbb{P}(\widetilde{W}_{k,it}>0)d_{L}\}\\
 & :=\frac{1}{var(\widetilde{W}_{k,it})}\{B_{1}+B_{2}+B_{3}+B_{4}\},
\end{aligned}
\]
 where for $B_{1}$, for all $\lambda\in\mathcal{W}_{+}$: 
\[
\begin{aligned}\mathbb{E}[\widetilde{W}_{k,it}|\widetilde{W}_{k,it}\geq\lambda]\mathbb{P}(\widetilde{W}_{k,it}\geq\lambda) & =\mathbb{E}[\widetilde{W}_{k,it}\boldsymbol{1}\{\widetilde{W}_{k,it}>\lambda\}|\widetilde{W}_{k,it}\geq\lambda]\mathbb{P}[\widetilde{W}_{k,it}\geq\lambda]\\
 & =\mathbb{E}[\widetilde{W}_{k,it}\boldsymbol{1}\{\widetilde{W}_{k,it}\geq\lambda\}],
\end{aligned}
\]
 which holds by the law of iterated expectations and implies that
\[
\begin{aligned}B_{1} & =\int_{d_{L}}^{d_{U}}\mathbb{E}[\widetilde{W}_{k,it}|\widetilde{W}_{k,it}\geq\lambda]\mathbb{P}(\widetilde{W}_{k,it}\geq\lambda)d\lambda\\
 & \int_{d_{L}}^{d_{U}}\int_{\mathcal{W}}d\boldsymbol{1}\{d\geq\lambda\}dF_{\mathcal{W}}(d)d\lambda\\
 & \int_{\mathcal{W}}d\left(\int_{d_{L}}^{d_{U}}\boldsymbol{1}\{\lambda\leq d\}d\lambda\right)dF_{\mathcal{W}}(d)\\
 & =\int_{\mathcal{W}}d(d-d_{L})dF_{\mathcal{W}}(d)\\
 & =\mathbb{E}[\widetilde{W}_{k,it}^{2}]-\mathbb{E}[\widetilde{W}_{k,it}]d_{L},
\end{aligned}
\]
 where the first line is $B_{1}$, the second holds by the previous
result, the third by changing the order of integration, the fourth
by carrying out the inner integration, and the last by rewriting the
integral as an expectation. Next, for $B_{2}$:
\[
\begin{aligned}B_{2} & =\mathbb{E}[\widetilde{W}_{k,it}]\int_{d_{L}}^{d_{U}}\mathbb{P}(\widetilde{W}_{k,it}\geq\lambda)d\lambda\\
 & =\mathbb{E}[\widetilde{W}_{k,it}]\mathbb{P}(\widetilde{W}_{k,it}>0)\int_{d_{L}}^{d_{U}}\mathbb{P}(\widetilde{W}_{k,it}\geq\lambda|\widetilde{W}_{k,it}>0)d\lambda\\
 & =\mathbb{E}[\widetilde{W}_{k,it}]\mathbb{P}(\widetilde{W}_{k,it}>0)\int_{d_{L}}^{d_{U}}\int_{d_{L}}^{d_{U}}\boldsymbol{1}\{d\leq\lambda\}dF_{\widetilde{W}_{k,it}|\widetilde{W}_{k,it}>0}(d)d\lambda\\
 & =\mathbb{E}[\widetilde{W}_{k,it}]\mathbb{P}(\widetilde{W}_{k,it}>0)\int_{d_{L}}^{d_{U}}\left(\int_{d_{L}}^{d_{U}}\boldsymbol{1}\{d\leq\lambda\}d\lambda\right)dF_{\widetilde{W}_{k,it}|\widetilde{W}_{k,it}>0}(d)\\
 & =\mathbb{E}[\widetilde{W}_{k,it}]\mathbb{P}(\widetilde{W}_{k,it}>0)\int_{d_{L}}^{d_{U}}(d-d_{L})dF_{\widetilde{W}_{k,it}|\widetilde{W}_{k,it}>0}(d)\\
 & =\mathbb{E}[\widetilde{W}_{k,it}]\mathbb{P}(\widetilde{W}_{k,it}>0)(\mathbb{E}[\widetilde{W}_{k,it}|\widetilde{W}_{k,it}>0]-d_{L})\\
 & =\mathbb{E}[\widetilde{W}_{k,it}]^{2}-\mathbb{E}[\widetilde{W}_{k,it}]\mathbb{P}[\widetilde{W}_{k,it}>0]d_{L},
\end{aligned}
\]
 where the first equality is the definition of $B_{2}$, the second
equality holds by the law of iterated expectations, the third equality
holds by writing $\mathbb{P}(\widetilde{W}_{k,it}\geq\lambda|\widetilde{W}_{k,it}>0)$
as an integral, the fourth equality changes the order of integration,
the fifth equality carries out the inside integration, the sixth equality
rewrites the integral as an expectation, the last equality holds by
combining terms and by the law of iterated expectations. Then for
$B_{3}$: 
\[
\begin{aligned}B_{3}= & \mathbb{E}[\widetilde{W}_{k,it}|\widetilde{W}_{k,it}>0]\mathbb{P}[\widetilde{W}_{k,it}>0]d_{L}\\
 & =\mathbb{E}[\widetilde{W}_{k,it}]d_{L},
\end{aligned}
\]
 which holds by the law of iterated expectations. Finally, for $B_{4}$:
\[
B_{4}=\mathbb{E}[\widetilde{W}_{k,it}]\mathbb{P}[\widetilde{W}_{k,it}>0]d_{L}.
\]
 Hence: $\ensuremath{B_{1}-B_{2}+B_{3}+B_{4}=\mathbb{E}[\widetilde{W}_{k,it}^{2}]-\mathbb{E}[\widetilde{W}_{k,it}]^{2}=var(\widetilde{W}_{k,it})}$,
which means that $\ensuremath{\int_{d_{L}}^{d_{U}}w_{1}(\lambda)d\lambda+w_{0}=1}$.
\end{proof}
\begin{proof}
\textbf{Proof of Theorem \ref{thm:(PVARs-estimate-average-ATE-ACR)}.
}The Theorem is immediate from nesting Definition \ref{def:(ATTs)}
and Theorem \ref{thm:(PVARs-estimate-weighted-average)}. 
\end{proof}
\begin{proof}
\textbf{Proof of Theorem \ref{thm:(PVAR:-Randomised,-dummy,}.} The
proof follows from the application of Lemma \ref{lemmaa1}. 
\end{proof}
\begin{proof}
\textbf{Proof of Theorem \ref{th:(identification-of-dummy-disturbances}.
}Notice that in the case of heterogeneous treatments, by Lemma \ref{lemmaa1}
$\hat{\gamma}_{jk}=\widetilde{Y}_{j,it}(\widetilde{W}_{k,it}=1)-\widetilde{Y}_{j,it}(\widetilde{W}_{k,it}=0)$
estimates
\[
\hat{\gamma}=\frac{\sum_{t=1}^{T}(\widetilde{W}_{it}-\overline{W}_{i})(\widetilde{Y}_{it}-\overline{Y}_{i})}{\sum_{t=1}^{T}(\widetilde{W}_{it}-\overline{W}_{i})^{2}}.
\]

Here $\overline{W}_{i}=\frac{1}{T}\sum_{t=1}^{T}\widetilde{W}_{it}$
and $\overline{Y}_{i}=\frac{1}{T}\sum_{t=1}^{T}\widetilde{Y}_{it}$.
If $\widetilde{W}_{it}$ is either equal to $1$ or $0$,
this is identical to the difference in average outcomes of $\hat{\gamma}=\widetilde{Y}_{it}(\widetilde{W}_{it}=1)-\widetilde{Y}_{it}(\widetilde{W}_{it}=0).$
\[
\begin{aligned}\gamma_{jk}= & \mathbb{E}[\widetilde{Y}_{j,it}|t\in T_{P},i\in I_{P}]-\mathbb{E}[\widetilde{Y}_{j,it}|t\in T_{P},i\in I_{C}]\\
 & -\mathbb{E}[\widetilde{Y}_{j,it}|t\in T_{C},i\in I_{P}]+\mathbb{E}[\widetilde{Y}_{j,it}|t\in T_{C},i\in I_{C}].
\end{aligned}
\]
 Expressing it in terms of potential outcomes: 
\[
\begin{aligned}\mathbb{E}[\widetilde{Y}_{j,it}(\widetilde{W}_{i,1:t-1},1,\widetilde{W}_{-k,it},\widetilde{W}_{i,t+1:T})|t\in T_{P},i\in I_{P}]\\
-\mathbb{E}[\widetilde{Y}_{j,it}(\widetilde{W}_{i,1:t-1},0,\widetilde{W}_{-k,it},\widetilde{W}_{i,t+1:T})|t\in T_{P},i\in I_{C}]\\
-\mathbb{E}[\widetilde{Y}_{j,it}(\widetilde{W}_{i,1:t-1},0,\widetilde{W}_{-k,it},\widetilde{W}_{i,t+1:T})|t\in T_{C},i\in I_{P}]\\
+\mathbb{E}[\widetilde{Y}_{j,it}(\widetilde{W}_{i,1:t-1},0,\widetilde{W}_{-k,it},\widetilde{W}_{i,t+1:T})|t\in T_{C},i\in I_{C}]
\end{aligned}
\]
 Moving from the definition of realized outcomes to the one of potential
outcomes: 
\[
\begin{aligned}\gamma_{jk}= & \mathbb{E}[\widetilde{Y}_{j,it}(1)|t\in T_{P},i\in I_{P}]-\mathbb{E}[\widetilde{Y}_{j,it}(0)|t\in T_{P},i\in I_{C}]\\
 & -\mathbb{E}[\widetilde{Y}_{j,it}(0)|t\in T_{C},i\in I_{P}]+\mathbb{E}[\widetilde{Y}_{j,it}(0)|t\in T_{C},i\in I_{C}]
\end{aligned}
\]
 Adding and subtracting the unobserved potential outcome of the treated
units in treated periods $\mathbb{E}[\widetilde{Y}_{j,it}(0)|t\in T_{P},i\in I_{P}]$

\[
\begin{aligned}\begin{aligned}\gamma_{jk}=\end{aligned}
 & \mathbb{E}[\widetilde{Y}_{j,it}(1)|t\in T_{P},i\in I_{P}]-\mathbb{E}[\widetilde{Y}_{j,it}(0)|t\in T_{P},i\in I_{P}]\\
\mathbb{} & -\mathbb{E}[\widetilde{Y}_{j,it}(0)|t\in T_{P},i\in I_{C}]+\mathbb{E}[\widetilde{Y}_{j,it}(0)|t\in T_{P},i\in I_{C}]\\
 & -\mathbb{E}[\widetilde{Y}_{j,it}(0)|t\in T_{C},i\in I_{P}]+\mathbb{E}[\widetilde{Y}_{j,it}(0)|t\in T_{C},i\in I_{C}]
\end{aligned}
\]
 By assumption \ref{assu:(Parallel-trends):} the second line cancels
out, by assumption \ref{assu:(No-anticipation)} the third line cancels
out. Hence, 
\[
\gamma_{jk}=\mathbb{E}[\widetilde{Y}_{j,it}(1)-\widetilde{Y}_{j,it}(0)|t\in T_{p},i\in I_{P}]
\]
Which can be rewritten into 
\[
\gamma_{jk}=\mathbb{E}[\widetilde{Y}_{j,it}(1)-\widetilde{Y}_{j,it}(0)|\widetilde{W}_{k,it}=1].
\]
\end{proof}
\begin{proof}
\textbf{Proof of Theorem \ref{th:spillover-fail}. } 
\[
\begin{aligned}
\gamma_{jk} &= \mathbb{E}[\widetilde{Y}_{j,k,it}(1,s)\mid t\in T_{p},i\in I_{p};S=s] 
- \mathbb{E}[\widetilde{Y}_{j,k,it}(0,s)\mid t\in T_{p},i\in I_{c};S=s] \\
&\quad - \mathbb{E}[\widetilde{Y}_{j,k,it}(0,0)\mid t\in T_{c},i\in I_{p}] 
+ \mathbb{E}[\widetilde{Y}_{j,k,it}(0,0)\mid t\in T_{c},i\in I_{c}]
\end{aligned}
\]

which then, by the absence of spillover effects in control periods, becomes
\[
\begin{aligned}
\gamma_{jk} &= \mathbb{E}[\widetilde{Y}_{j,k,it}(1,s)\mid t\in T_{p},i\in I_{p};S=s] 
- \mathbb{E}[\widetilde{Y}_{j,k,it}(0,s)\mid t\in T_{p},i\in I_{c};S=s]
\end{aligned}
\]

adding and subtracting the counterfactual 
$\mathbb{E}[\widetilde{Y}_{j,k,it}(0,0)\mid t\in T_{p},i\in I_{p};S=s]$, it becomes

\[
\begin{aligned}
\gamma_{jk} &= \mathbb{E}[\widetilde{Y}_{j,k,it}(1,s)\mid t\in T_{p},i\in I_{p};S=s] 
- \mathbb{E}[\widetilde{Y}_{j,k,it}(0,s)\mid t\in T_{p},i\in I_{c};S=s] \\
&\quad + \mathbb{E}[\widetilde{Y}_{j,k,it}(0,0)\mid t\in T_{p},i\in I_{p};S=s] 
- \mathbb{E}[\widetilde{Y}_{j,k,it}(0,0)\mid t\in T_{p},i\in I_{p};S=s]
\end{aligned}
\]

which can be grouped as 
\[
\begin{aligned}
\gamma_{jk} &= \mathbb{E}[\widetilde{Y}_{j,k,it}(1,s)\mid t\in T_{p},i\in I_{p};S=s] 
- \mathbb{E}[\widetilde{Y}_{j,k,it}(0,0)\mid t\in T_{p},i\in I_{p};S=s] \\
&\quad + \mathbb{E}[\widetilde{Y}_{j,k,it}(0,0)\mid t\in T_{p},i\in I_{p};S=s] 
- \mathbb{E}[\widetilde{Y}_{j,k,it}(0,s)\mid t\in T_{p},i\in I_{c};S=s]
\end{aligned}
\]

and by the definitions of $\widetilde{\text{ATTE}}_{jk}$ and $\widetilde{\text{ASTE}}_{jk}$ becomes 

\[
\gamma_{jk} = \widetilde{\text{ATTE}}_{jk} - \widetilde{\text{ASTE}}_{jk}.
\]

\end{proof}
\begin{proof}
\textbf{Proof of Theorem \ref{th:spillover-identify}. } Notice that the estimator $\delta_{j,k}$ in equation \ref{eq: estimation-spillover} is
\[
\begin{aligned}
\delta_{j,k} = 
&\ \mathbb{E}[Y_{j,k,it}(1,s) \mid t \in T_{p}, i \in I_{p}; S_{j,k,it} = s] 
- \mathbb{E}[Y_{j,k,it}(0,0) \mid t \in T_{p}, i \in I_{c}; S_{j,k,it} = 0] \\
& - \mathbb{E}[Y_{j,k,it}(0,0) \mid t \in T_{c}, i \in I_{p}; S_{j,k,it} = 0] 
+ \mathbb{E}[Y_{j,k,it}(0,0) \mid t \in T_{c}, i \in I_{c}; S_{j,k,it} = 0]
\end{aligned}
\]

Notice that, using assumption \ref{assu:modified-no-ancicip}:
\[
\begin{aligned}
\delta_{j,k} = 
\mathbb{E}[Y_{j,k,it}(1,s) \mid t \in T_{p}, i \in I_{p}; S_{j,k,it} = s] 
- \mathbb{E}[Y_{j,k,it}(0,0) \mid t \in T_{p}, i \in I_{c}; S_{j,k,it} = 0]
\end{aligned}
\]

Further, by adding and subtracting $\mathbb{E}[Y_{j,k,it}(0,0) \mid t \in T_{p}, i \in I_{p}; S_{j,k,it} = s]$:
\[
\begin{aligned}
\delta_{j,k} = 
&\ \mathbb{E}[Y_{j,k,it}(1,1) \mid t \in T_{p}, i \in I_{p}; S_{j,k,it} = 1] 
- \mathbb{E}[Y_{j,k,it}(0,0) \mid t \in T_{p}, i \in I_{c}; S_{j,k,it} = 0] \\
& + \mathbb{E}[Y_{j,k,it}(0,0) \mid t \in T_{p}, i \in I_{p}; S_{j,k,it} = s] 
- \mathbb{E}[Y_{j,k,it}(0,0) \mid t \in T_{p}, i \in I_{p}; S_{j,k,it} = s]
\end{aligned}
\]
and, by assumption \ref{assu:modified-parallel-trends}:
\[
\delta_{j,k} = 
\mathbb{E}[Y_{j,k,it}(1,s) - Y_{j,k,it}(0,0) \mid t \in T_{p}, i \in I_{p}; S_{j,k,it} = s] 
\]
which is equal to the total effect on the treated by definition.
\end{proof}

\clearpage{}
\end{document}